\def\calc{{\cal C}}
\newtheorem{theorem}{Theorem}
\newtheorem{lemma}{Lemma}
\newtheorem{corollary}{Corollary}
\newtheorem{thm}{Theorem}[section]
\theoremstyle{definition}
\newtheorem{defn}[thm]{Definition}
\theoremstyle{remark}
\title{Parallel (and other) extensions of the deterministic online model for bipartite matching and max-sat}
\author{N.~Pena\thanks{Dept of Computer Science,University of Toronto, pena@cs.toronto.edu.} \and A.~Borodin\thanks{Dept of Computer Science,University of Toronto, bor@cs.toronto.edu.} }
\begin{document}

\maketitle

\begin{abstract}
The surprising results of Karp, Vazirani and Vazirani \cite{Karp1990a} and (respectively) 
Buchbinder et al \cite{Buchbinder2012} are examples where rather simple randomization provides provably better approximations than the corresponding 
deterministic counterparts for online bipartite matching and (respectively) 
unconstrained non-monotone submodular. We show that seemingly strong extensions of the deterministic online 
computation model can at best match the performance of naive randomization. 
More specifically, for bipartite matching, we show that in the priority model 
(allowing very general ways to order the input stream), we cannot improve upon the trivial $\frac{1}{2}$ approximation achieved by any greedy maximal matching algorithm and likewise cannot improve upon this approximation by any $\frac{log n}{\log \log n}$ number of online algorithms running in parallel. The latter result yields an improved $\log \log n - \log \log \log n$ lower bound for the number of advice bits needed. For max-sat, we adapt the recent de-randomization approach of Buchbinder and Feldman \cite{Buchbinder2016} applied to the Buchbinbder et al \cite{Buchbinder2012} algorithm for max-sat to obtain a deterministic $\frac{3}{4}$ approximation algorithm using width $2n$ parallelism.  In order to improve upon this approximation, we show that 
exponential width parallelism of online algorithms is necessary (in a  model 
that is more general than what is needed for the width $2n$ algorithm).   

\end{abstract}

\section{Introduction}

It is well known that in the domain of online algorithms it is often 
provably necessary to use randomization in order to achieve reasonable approximation 
(competitive) ratios. It is interesting to ask when can the use of randomization be replaced by extending the online framework. In a  more constructive sense, can we de-randomize certain online algorithms by 
considering more general one pass algorithms? This question has already been answered in a couple of senses. 
B\"{o}ckenhauer et al \cite{Bock11} show that a substantial class of randomized algorithms can be 
 transformed (albeit {\it non-uniformly} and inefficiently) to an online algorithm with (small) advice. Buchbinder and Feldman \cite{Buchbinder2016} show how to uniformly and efficiently de-randomize the Buchbinder et al 
\cite{Buchbinder2012} algorithm for the unconstrained non-monotone submodular function maximization (USM) problem. The resulting de-randomized algorithm can be viewed as a 
parallel algorithm in the form of a ``tree of online algorithms''. We formalize their algorithm as an online restriction of
the Alekhnovich et al \cite{Alekhnovich2011a} pBT model. 

In this paper we consider two classical optimization problems, namely 
maximum cardinality bipartite matching and the max-sat problem. 
As an offline problem, it is well known that graph matching and more specifically bipartite matching (in both the unweighted and weighted cases) can be solved optimally in polynomial time. Given its relation to the adwords problem for online advertising, bipartite matching has also been 
well-studied as a (one-sided) online problem. Max-sat is not usually thought of as an online problem but many of the combinatorial approximation algorithms  
for max-sat (e.g. Johnson's algorithm \cite{Johnson}, Poloczek and Schnitger \cite{Poloczek2011c}, and  
Buchbinder et al \cite{Buchbinder2012}) can be viewed as online or more generally one-pass myopic algorithms.   

To study these types of problems from an online perspective, several precise models of computation have been defined. With respect to these models, we can begin to understand the power and limitations of deterministic and randomized algorithms that in some sense can be viewed as online or one-pass. Our paper will be organized as follows. We will conclude this section with an informal list of our main results.  
The necessary definitions will be provided  in Section \ref{sec:preliminaries}. Section \ref{sec:related} contains a review of the most relevant 
previous results. Section \ref{sec:max-sat} considers online parallel width results for max-sat. In Section \ref{sec:bipartite} we will consider 
results for bipartite matching with respect to parallel width, the priority model \cite{Borodin2003}, and the random order model (ROM). We conclude with a number of open problems in 
Section \ref{sec:conclusions}. 

\subsection{Our results}
\begin{itemize}

\item
For the max-sat problem, we will show that the Buchbinder and Feldman \cite{Buchbinder2016} de-randomization method can be applied to obtain a deterministic parallel width $2n$ 
online $\frac{3}{4}$ approximation algorithm.

\item We will then show (in a model more general than what is needed for the 
above $\frac{3}{4}$ approximation), that exponential width is required 
to improve upon this ratio even for the exact max-2-sat problem for
which Johnson's algorithm already achieves a $\frac{3}{4}$ approximation. 

\item We also offer a plausible width 2 algorithm that might achieve the $\frac{3}{4}$ approximation  or at least might improve upon the $\frac{2}{3}$ approximation  
achieved by Johnson's deterministic algorithm \cite{Johnson} which in some models is provably the best online (i.e. width 1) algorithm.

\item For bipartite matching we show that constant width (or even width $\frac{\log n}{\log \log n}$) can cannot asymptotically beat the trivial $\frac{1}{2}$ approximation
 acheived by any greedy maximal matching algorithm. This implies that more than 
$\log \log n - \log \log \log n$ advice bits are needed to asymptotically improve upon the $\frac{1}{2}$ approximation achieved by any greedy maximal matching algorithm.

\item We offer a plausible candidate for an efficient polynomial width 
online algorithm for bipartite matching. 

\item
For bipartite matching,
we will also show that the ability to sort the input
items as in the priority model cannot compensate for absense of randomization.

\item
We also make some observations about bipartite matching in the random order model.

\end{itemize}

\section{Preliminaries: definitions, input and algorithmic Models}
\label{sec:preliminaries}

We will briefly describe the problems of interest in the paper and then proceed to define the algorithmic models and input models relative to which we will 
present our results. 

\subsection{The Bipartite Matching and Max-Sat Problems}
In the unweighted  matching problem, the input is a graph $G=(V,E)$ and the objective is to find the largest subset of edges $S \subseteq E$ that are vertex-disjoint 
and such that $|S|$ is as large as possible. Bipartite matching is the special case where $V = A \cup B$ and $E \subseteq A \times B$.

In the (weighted) max-sat problem, the input is a propositional formula in CNF form.
That is, there is a set of clauses $\calc = \{C_1, \ldots, C_n\}$ where each clause is a set of literals and each literal is a propositional variable or its negation. 
The objective is to find an assignment to the variables that maximizes the number of clauses in $\calc$ that are satisfied. In the weighted case, each clause has an associated weight and the objective is to maximize the sum of weights of clauses that are satisfied. 
The max-sat problem has been generalized to the submodular max-sat problem where there is a normalized monotone submodular function $f:2^{\calc} \rightarrow \mathbb{R}$ and we wish to assign the variables to maximize $f(\calc')$ where $\calc' \subseteq \calc$ is the subset of satisfied clauses.

When a specific problem needs to be solved, there are many possible input instances. An instance is just a specific input for the problem in hand. For example, in bipartite matching, an instance is a bipartite graph. In weighted max-sat, an instance is a set of clauses (including a name, weight, and the literals in each clause). An algorithm will have the goal of obtaining a good solution to the problem for every possible instance (i.e. we are only considering worst case complexity). To establish inapproximation results, we construct an adversarial instance or a family of bad instances for every algorithm (one instance or family of instances per algorithm). We will mainly study deterministic algorithms and their limitations under certain models.

We measure the performance of online algorithms by the competitive (or approximation) ratio\footnote{The name competitive ratio is usually used when considering online problems while approximation ratio is used in other settings. We will just use approximation ratio in any model of computation. For the maximization problems we consider, the approximation ratio is typically considered as a fraction less than or equal to 1.}. 

\begin{defn}
Let $\mathbb{A}$ be an algorithm for a problem. Let $\mathbb{{\cal I}}$ be the set of all possible instances for the problem. For $I \in \mathbb{{\cal I}}$, let $v(\mathbb{A}, I)$ be the value obtained by the algorithm $\mathbb{A}$ on instance $I$ and let $v(I)$ be the optimal value for that instance. The approximation ratio of algorithm $\mathbb{A}$ is:
\[ \inf_{I \in \mathbb{{\cal I}}} \frac{v(\mathbb{A}, I)}{v(I)}\]

Let ${\cal I}_n$ be the set of all instances of size $n$ for the problem. The asymptotic approximation ratio of $\mathbb{A}$ is:
\[ \liminf_{n \in \mathbb{N}} \inf_{I \in {\cal I}_n} \frac{v(\mathbb{A}, I)}{v(I)} \]
\end{defn}

The approximation ratio considers how the algorithm performs in every instance, while the asymptotic approximation ratio considers how the algorithm performs on large instances. Having a small instance where an algorithm performs poorly shows that the algorithm has a low approximation ratio, but says nothing about its asymptotic approximation ratio. 


\subsection{Algorithmic Models}

We define the precise {\it one-pass} algorithmic models that we consider in this paper. For each, the algorithm may receive some limited amount of information in advance. Other than this, an instance is composed of individual {\it data items}. We define the size of an instance as the number of data items it is made of. Data items will be received in a certain order, and how this ordering is chosen depends on the algorithmic model. In addition, when a data item is received, the algorithm must make an irrevocable decision regarding this data item before the next data item is considered. 
The solution then consists of the decisions that have been made. 
We do not make any assumptions regarding time or space constraints for our algorithms. In fact, the limitations proven for these algorithmic models are information-theoretic: since the algorithm does not know the whole instance, there are multiple potential instances, and any decision it makes may be bad for some of these. Of course, how each data model is defined will depend on the specific problem (and even for one problem there may be multiple choices for the information contained in data items). 
Unless otherwise stated, for inapproximations we assume that the algorithm knows the size, and for positive results we assume that it does not know the size.

\subsubsection{Online Model}

In the online model, the algorithm has no control whatsoever on the order in which the data items are received. That is, data items arrive in any order (in particular, they may arrive in the order decided by an adversary and as each data item arrives, the algorithm has to make its decision. Thus, in this model an adversary chooses an ordering to prevent the algorithm from achieving a good approximation ratio. As long as it remains consistent with previous data items, the data item that an adversary presents to the algorithm may depend on previous data items and the choices that the algorithm has made concerning them. The following presents the structure of an online algorithm. 

\medskip
\noindent\fbox{%
\begin{minipage}{\dimexpr\linewidth-2\fboxsep-2\fboxrule\relax}
\begin{algorithmic}[1]
\Statex \textbf{Online Algorithm}
\State On an instance $I$, including an ordering of the data items ($d_1, \ldots, d_n$):
\State $i:=1$ 
\State \textbf{While} there are unprocessed data items 
\State The algorithm receives $d_i$ and makes an irrevocable decision for $d_i$ 
\Statex (based on $d_i$ and all previously seen data items and decisions).
\State $i:= i+1$
\State \textbf{EndWhile}
\end{algorithmic}
\end{minipage}%
}
\medskip

In online bipartite matching, it is standard to consider that the algorithm knows in advance the names of the vertices from one of the sides of the graph, which we call the offline side. 
The other side, which we call the online side, is presented in parts. A data item consists of a single vertex from the online side along with all of its neighbours on the offline side. At each step, an algorithm can match the online vertex to any of its unmatched neighbours, or it can choose to reject the vertex (leave it unmatched). In either case, we say that it processes the vertex. All decisions made by the algorithm are irrevocable. We call this problem online one-sided bipartite matching.

A data item consists of a single variable along with some information about the clauses where this variable appears. We consider four models, in increasing order of the amount of knowledge received:

\textbf{Model 0:} The data item is a list with the names and weights of the clauses where the variable appears positively and a list of names and weights of the clauses where the variable appears negatively.

\textbf{Model 1:} Model 0, plus the length of each clause is also included.

\textbf{Model 2:} Model 1, plus for each clause we also include the names of the other variables that occur in this clause. The data item does not include the signs of these variables.

\textbf{Model 3:} Model 2, plus for each clause we also include the signs of other variables in the clause. That is, in this model, the data item contains all the information about the clauses in which the variable occurs.

For submodular max-sat, in addition to a description of the model for variables and their clauses, we need to state how the submodular function is presented to the algorithm. Since the submodular function has domain exponential in the size of the ground set (in this case the number of clauses), it is usually assumed that the algorithm does not receive the whole description of this function from the start. Instead, there is an oracle which may answer queries that the algorithm makes. A common oracle model is the value oracle, where the algorithm may query $f(S)$ for any subset $S$. In our restricted models of computation, this is further restricted so that the set $S$ can include only elements seen so far, including the current element for which the algorithm is making a decision. For submodular max-sat, this means restricting to sets of clauses each containing at least one seen variable. However, for some algorithms this is too restrictive. In the double-sided myopic model from Huang and Borodin \cite{Huang2014}, the value oracle may also query the complement function $\overline{f}$ defined by $\overline{f}(S) = f(C \setminus S)$ (where $C$ is the ground set).

\subsubsection{Width models}

We consider a framework which provides a natural way to allow more powerful algorithms, while maintaining in some sense an online setting. The idea is that now, instead of maintaining a single solution, the algorithm keeps multiple possible solutions for the instance. When the whole instance has been processed, the algorithm returns the best among the set of possible solutions that it currently has. The point is to limit the number of possible solutions the algorithm can have at any point in time. The possible solutions can be viewed as a tree with levels. A node in the tree corresponds to a possible solution and an edge means that one solution led to the other one after the algorithm saw the next data item. At the beginning, there is a single empty solution: the algorithm has made no choices yet. This will be the root of the tree and it will be at level 0. Every time a possible solution is split into multiple solutions, the tree branches out. Every time a data item is processed, the level increases by 1. The algorithm may decide to discard some solution. This corresponds to a node with no children and we say that the algorithm cuts the node. The restriction is that the tree can have at most $k$ nodes in each level. Since each level represents a point in time, this means the algorithm can never maintain more than $k$ possible solutions.

There are three different models to consider here. In the max-of-$k$ model, the algorithm branches out at the very beginning and does not branch or cut later on. In particular, this model is the same as having $k$ different algorithms and taking the maximum over all of them in the end. In the width-$k$ model, the algorithm can branch out at any time. However, the algorithm cannot cut any possible solution: every node in the tree that does not correspond to a complete solution (obtained after viewing all data items) must have at least one child. Finally, in the width-cut-$k$ model, the algorithm can branch out and disregard a possible solution at any time (so that, at later levels, this possible solution will not contribute to the width count).

Although we shall not do so, the width models can also be extended to the priority and ROM settings. For example, we can consider width in the priority model as follows: the algorithm can choose an ordering of the data items as in the fixed priority model (in particular, the arrival order has to be the same for all partial solutions), and once an input item arrives the solutions are updated, branch, or cut as desired. This then is the 
fixed order pBT model of Alekhnovich et al \cite{Alekhnovich2011a}.
It is also possible to allow each branch of the pBT to adaptively choose the ordering of the remaining items and this then is the adaptive pBT model. 

\subsection{Relation to  advice and semi-streaming models}
\label{max-of-k-becomes}

Most inapproximation results for the online and related models we consider  
 allow the algorithm to know $n$, the size of
the input.
One would like to allow these algorithms to also know other
easily computed information about the input (e.g. maximum or minimum  
degree of a node, etc.)
In keeping with the information
theoretic nature of inapproximation results, one way to state
this is to allow these algorithms to use any small (e.g. $O(\log n)$)
bits of advice and not require that the advice be efficiently computed. 
The {\it online with advice model} \cite{Bock11, EmekFKR11} that is related to our width models allows the algorithm to access a (say) binary advice string initially given to it by an oracle that knows the whole input and has unlimited computational power. Clearly, the advice string could encode the optimal strategy for the algorithm on the input, so the idea is to understand how the performance of the algorithm changes depending on how many bits of advice it uses. 
A non-uniform algorithm is a set of algorithms, one for each value of $n$. In particular, a non-uniform algorithm knows $n$, the size of the input. The following simple observation shows that the non-uniform advice and max-of-$k$ models are equivalent.

\begin{lemma}\label{WidthAdvice}
Suppose that an online algorithm knows $n$, the size of the input. Then there is an algorithm using $b(n)$ advice achieving an approximation ratio of $c$ if and only if there is a max-of-$2^{b(n)}$ algorithm with approximation ratio $c$.
\end{lemma}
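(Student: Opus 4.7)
The plan is to prove both directions by a direct simulation argument, using the fact that the algorithm knows $n$ so the advice length $b(n)$ and the width $2^{b(n)}$ are fixed quantities known to the algorithm designer.

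For the forward direction, suppose $\mathbb{A}$ is an online algorithm with $b(n)$ advice bits achieving ratio $c$. I would construct a max-of-$2^{b(n)}$ algorithm that, at the root, branches into $2^{b(n)}$ copies, one for each possible advice string $s \in \{0,1\}^{b(n)}$. Each branch then simulates $\mathbb{A}$ using $s$ as its advice and does no further branching or cutting, so this fits the max-of-$k$ template. On any instance $I$ of size $n$, there exists some advice string $s^*$ for which $\mathbb{A}$ achieves value at least $c \cdot v(I)$, and the branch corresponding to $s^*$ realizes this value; taking the max over branches preserves it.

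For the backward direction, suppose $\mathbb{M}$ is a max-of-$2^{b(n)}$ algorithm achieving ratio $c$. Fix an indexing of its $2^{b(n)}$ branches by binary strings of length $b(n)$. Define the advice-based algorithm to first read $b(n)$ advice bits, interpret them as an index, and then simulate the corresponding branch of $\mathbb{M}$. The oracle, knowing the entire instance $I$, computes which branch of $\mathbb{M}$ achieves the maximum value on $I$ and writes its index as the advice. Since the best branch attains at least $c \cdot v(I)$, so does the advice algorithm.

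There is no real obstacle here: the argument is essentially a bookkeeping one, with the only subtle point being that $b(n)$ depends on $n$ and the equivalence is non-uniform, which is fine because we already assume the algorithm knows $n$ and the advice length is allowed to be a function of $n$. I would write the proof as two short paragraphs, one per direction, with no displayed equations needed.
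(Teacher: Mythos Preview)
Your proposal is correct and matches the paper's proof essentially verbatim: the paper also simulates all $2^{b(n)}$ advice strings in parallel for the forward direction and has the oracle encode the index of the best branch for the backward direction. Your write-up is slightly more detailed (explicitly noting non-uniformity and that no further branching or cutting occurs), but the argument is the same.
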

\begin{proof}
Let $\mathbb{A}$ be an online advice algorithm using advice of size $t$ and achieving a $c$ approximation ratio. The following max-of-$2^t$ algorithm achieves this ratio: try all possible advice strings of length $t$, and take the best option among these. If $\mathbb{M}$ is a max-of-$2^t$ algorithm with approximation ratio $c$, then there is a $t$  bit advice algorithm that, for each input, encodes the best choice among the $2^t$ that will achieve this approximation ratio.
\end{proof}

The online advice model and therefore the small width online model also has a weak relation 
with the graph
semi-streaming graph, a  model suggested by Muthukrishhan \cite{Muthukrishnan05} and studied  futher in Feigenbaum et al \cite{FeigenbaumKMSZ}. In that model, {\it edges} arrive online for a graph optimization problem (e.g. 
matching). More directly related to our models, Goel et al \cite{GoelKK12} consider the model where vertices arrive online. In either case, letting $n$ be the number of vertice, the algorithm is constrained to use 
${\tilde O}(n) = n \log^{O(1)}n$ space which 
can be  substantially less space than the number of edges.  
The online advice and semi-streaming models are not directly comparable since on the one hand semi-streaming algorithms are not forced to make irrevcable decisions, while online algorithms are not space constrained. 
In order to relate the online model to the semi-streaming model, we need to restrict the online model to those algorithms in which the computation satisifies   
  the ${\tilde O}(n)$ semi-streaming space bound. In particular, the 
algorithm 
cannot store all the information contained in the data items that have been considered in previous iterations. 
%
%
%

\subsubsection{Priority model}

The difference between the priority and online models is that in the priority model the algorithm has some power over the order in which data items arrive. Every time a new data item is about to arrive, an ordering on the universe of potential data items is used to decide which one arrives. This ordering is  do not impose any restrictions on how this order is produced (but since it is produced by the algorithm, it cannot depend on data items from the current instance that it has not yet seen). In particular, the ordering need not be computable. In the 
{\it fixed priority model}, the algorithm only provides one ordering at the beginning. For a given instance, the data items are shown to the algorithm according to this ordering (and as before, when one arrives, the algorithm must make a decision regarding the data item).

\medskip
\noindent\fbox{%
\begin{minipage}{\dimexpr\linewidth-2\fboxsep-2\fboxrule\relax}
\begin{algorithmic}[1]
\Statex \textbf{Fixed Priority Algorithm}
\State The algorithm specifies an ordering $\pi: U \rightarrow \mathbb{R}$, where $U$ is the universe of all possible data items.
\State On instance $I$, the data items are ordered $d_1, \ldots, d_n$ so that $\pi(d_1) \leq \pi(d_2) \leq \ldots \leq \pi(d_n)$.
\State $i:=1$
\State \textbf{While} there are unprocessed data items
\State The algorithm receives $d_i$ and makes a decision for $d_i$
\Statex (based on $d_i$ and all previously seen data items and decisions).
\State $i:= i+1$
\State \textbf{EndWhile}
\end{algorithmic}
\end{minipage}%
}
\medskip

In the {\it adaptive priority model}, the algorithm provides a new ordering every time a data item is about to arrive. Thus, in the fixed order model, the priority of each item is a (say) real valued function of the input item and in the adaptive order model, the priority function can also depend on all previous items and decisions. In the matching problem, an algorithm could for example choose an ordering $\pi$ so that data items corresponding to low degree vertices are preferred. Or it could choose $\pi$ to prefer data items corresponding to vertices that are neighbours of some specific vertex. Unless otherwise stated, when we say priority we mean adaptive priority. The following shows the template for adaptive priority algorithms:

\medskip
\noindent\fbox{%
\begin{minipage}{\dimexpr\linewidth-2\fboxsep-2\fboxrule\relax}
\begin{algorithmic}[1]
\Statex \textbf{Adaptive Priority Algorithm}
\State On instance $I$, initialize $D$ as the set of data items corresponding to $I$ and $U$ as the universe of all data items. Let $n := |D|$.
\State $i:=1$
\State \textbf{While} there are unprocessed data items
\State The algorithm specifies an ordering $\pi: U \rightarrow \mathbb{R}$.
\State Let $d_{i} := \operatorname{argmin}_{d \in D}(\pi(d))$.
\State The algorithm receives $d_{i}$ and makes a decision about it.
\State Update: $D \leftarrow D \setminus \{d_{i}\}$, $U\leftarrow \{\mbox{data items consistent with }d_1, \ldots, d_i\}$.
\State $i:= i+1$
\State \textbf{EndWhile}
\end{algorithmic}
\end{minipage}%
}
\medskip

Usually, the adversarial argument in this model is as follows: the adversary begins by choosing a subset $S$ from the universe of all data items. This will be the set of potential data items for the problem instance. Now, for every $s_1 \in S$, the algorithm could choose this data item as the first one by using an appropriate ordering $\pi$. After the algorithm chooses $s_1$ and makes its decision for this data item, the adversary further shrinks $S$, thus obtaining a smaller subset of potential data items. The algorithm then proceeds by choosing a second data item $s_2 \in S$ and makes a decision concerning it. After this the adversary further shrinks $S$, and this goes on until $S$ becomes empty.

In some adversarial instances, some data items may be indistinguishable to the algorithm. If $d$ and $d'$ are indistinguishable to the algorithm, the algorithm may produce an ordering $\pi$ to try to receive $d$, but the adversary can force $d'$ to be received instead. For instance, in the matching problem, in the beginning data items corresponding to vertices with the same degree will be indistinguishable. If the algorithm tries to produce an ordering $\pi$ to get the data item of a specific vertex, the adversary can rename vertices so that the algorithm receives the data item of some other vertex with the same degree. However, after the first vertex has been seen, the algorithm now does have some limited information about the names of other nodes and can possibly exploit this knowledge 
as we will see in Section \ref{subsec:randomized-priority}. 

In the most common model for general graph matching, a data item consists of a vertex name along with its neighbours. Here, every vertex has an associated data item. This contrasts with the common data item for online bipartite matching, where only the online side vertices have associated data items. We call the former model (restricted to instances that are bipartite) two-sided bipartite matching, and the latter one-sided. We shall restrict attention to the one-sided problem. 

\subsubsection{The random order model}

In the random order model, usually abbreviated by ROM, neither the algorithm nor the adversary chooses the order in which the data items are presented. Instead, given an input set chosen by an adversary, 
a permutation of the data items is chosen uniformly at random and this permutation dictates the order in which data items are presented. Once the random input permutation has been instantiated, the model is the same as the one-sided online model.

\begin{defn}
Let $\mathbb{A}$ be an online algorithm for a problem whose set of all possible instances is ${\cal I}$. For an instance $I$ of size $n$ and a permutation $\sigma \in S_n$, let $I(\sigma)$ be $I$ with data items presented in the order dictated by $\sigma$. Let $v(\mathbb{A}, I(\sigma))$ be the value achieved by the algorithm on $I(\sigma)$, and let $v(I)$ be the optimal value for $I$. The approximation ratio of $\mathbb{A}$ in ROM is:
\[ \inf_{I \in {\cal I}} \frac{\mathbb{E}_{\sigma \in S_n}[v(\mathbb{A},I(\sigma))]}{v(I)}\]

Let ${\cal I}_n$ be the set of instances of size $n$. Then the asymptotic approximation ratio of $\mathbb{A}$ in ROM is:
\[ \liminf_{n \in \mathbb{N}} \inf_{I \in {\cal I}_n} \frac{\mathbb{E}_{\sigma \in S_n}[v(\mathbb{A},I(\sigma))]}{v(I)} \]
\end{defn}

\section{Related work}
\label{sec:related}

The analysis of online algorithms in terms of the competitive ratio was 
explicitly begun by Sleator and Tarjan \cite{SleatorT85} although there were some previous papers that implictly were doing competitive analysis (e.g. 
Yao \cite{Yao80}). The max-of-$k$ online width model was introduced in Halld\'orsson et al \cite{Iwama2000} and Iwama and Taketomi \cite{Iwama2002} where they considered the the maximum independent set and knapsack problems. 
Buchbinder and Feldman showed a deterministic algorithm with approximation ratio $1/2$ for unconstrained submodular maximization which fits the online width model (but not the max-of-$k$ model) \cite{Buchbinder2016}. Their initial approach involved solving an LP at each step. They showed how to simplify the algorithm so that it does not require an LP solver, and the width used is linear.

Hopcroft and Karp \cite{HopcroftK73} 
showed that unweighted bipartite matching can be optimally solved offline in time $O(m \sqrt{n})$. For sparse graphs, the first improvement in 40 years is the ${\tilde O}(m^{\frac{10}{7}})$ time algorithm due to Madry \cite{Madry13}.   
With regard to the online setting, the seminal paper of 
Karp, Vazirani and Vazirani \cite{Karp1990a} established a number of surprising results for (one-sided) online bipartite matching. After 
observing that no online deterministic algorithm can do better than a $\frac{1}{2}$ approximation, they studied randomized algorithms.  In particular they showed that the natural randomized algorithm RANDOM (that matches an online vertex uniformly at random to an available offline vertex) only achieved an asymptotic approximation ratio of 
$\frac{1}{2}$;  that is, the same approximation as any greedy maximal matching algorithm. 
They then showed that their randomized RANKING algorithm achieved \footnote{It was later discoved 
\cite{GoelM08} (and independently by Krohn and Varadarajan) that there was an error in the KVV analysis. A correct proof was prodvided in \cite{GoelM08} and subsequently alternative proofs \cite{Birnbaum2008,Devanur2013a} have been provided.}
ratio $1-\frac{1}{e} \approx .632$. RANKING initially chooses a permutation of the offline vertices and uses that permutation to determine how to match an online vertex upon arrival. By deterministically fixing any permutation of the offline vertices, the Ranking algorithm can be interpreted as a deterministic algorithm in the ROM model. While Ranking is optimal as an online randomized algorithm, it is not known if its interpretation as a deterministic ROM algorithm is optimal for all 
online deterministic algorithms. Goel and Mehta \cite{GoelM08} show that no deterministic ROM algorithm can achieve an approximation better than $\frac{3}{4}$.
 The KVV algorithm can also be implemented as a 
$O(n \log n)$ space randomized semi-streaming algorithm whereas Goel el al 
\cite{GoelKK12} show that there is a determinstic semi-streaming 
algorithm using 
only $O(n)$ space provably establishing the power of the semi-streaming model. 
In the ROM model, the randomized Ranking  algorithm achieves an approximation ratio of at least $0.696 > 1-1/e$ \cite{Karande2011, Mahdian2011a} and at most $0.727$ \cite{Karande2011}.  
Following the KVV paper there have been a number of extensions of online bipartite matching with more direct application to online advertising (see, for example, \cite{Mehta2005, GoelM08, Aggarwal, Kesselheim2013a}),  and has also been studied in various stochastic models where the input graph is generated by sampling i.i.d from a known or unknown distribution of online vertices (see \cite{Feldman2009a, Bahmani2010, Manshadi2011, Jaillet, Karande2011}). Manshadi et al \cite{Manshadi2011} showed that no randomized ROM algorithm can achieve an asymptotic approximation ratio better than $0.823$ by establishing that inapproximation for the stochastic unknown i.i.d model.    

In the online with advice model for bipartite matching, D{\"{u}}rr et al \cite{Durr2016a} apply  the B\"{o}ckenhauer et al de-randomization idea to show that 
for every $\epsilon >0$, there is an ({\it inefficient}) $O(\log n)$ advice algorithm achieving ratio $(1-\epsilon)(1-\frac{1}{e})$. This is complemented by 
Mikkleson's \cite{Mikkelsen15} recent result showing that no online 
(even randomized)  algorithm 
using sublinear  $o(n)$ advice  can asymptotically improve upon the 
$1- \frac{1}{e}$ ratio achieved by KVV.  Furthermore, D{\"{u}}rr et al show that $O(\frac{1}{\epsilon^5} n) $ advice is sufficient, and $\Omega(\log(\frac{1}{\epsilon}) n$ advice is necessary  to achieve a $(1-\epsilon)$ ratio.
They show that for a natural but restricted class of algorithms, $\Omega(\log \log \log n)$ advice bits are  needed for deterministic algorithms to obtain an approximation ratio asymptotically better than $1/2$.
Finally, their Category-Advice algorithm is a deterministic two pass online 
(i.e. adversarial order) $\frac{3}{5}$ approximation 
algorithm where the first pass is used to give priority in the second pass  to 
the  offline vertices that were unmatched in 
the first pass. That is, the first pass is efficiently constructing an $n$ bit 
advice string for the second pass.  

The priority setting was introduced by Borodin et al \cite{Borodin2003}. It has been studied for problems such as makespan scheduling \cite{Angelopoulos2010a}, and also in several graph optimization problems \cite{Borodin2010a, Davis2009}. In terms of the maximum matching problem, most results are about general graph matching. Here, a data item consists of a vertex (the vertex that needs to be matched) along with a list of its neighbours. Aronson et al \cite{Aronson95} showed that the algorithm which at each step chooses a random vertex and then a random neighbour (to pick an edge to add to the matching) achieves an approximation ratio of $1/2+c$ for some $c>0$. Besser and Poloczek \cite{Besser2015} showed that MinGreedy, the algorithm that at each step picks an edge with a vertex of minimum degree, will not get an approximation ratio better than $1/2$ (even in the bipartite case), but for $d$-regular graphs this approximation ratio improves to $\frac{d-1}{2d-3}$. They showed that no deterministic greedy (adaptive) priority algorithm can beat this ratio for graphs of maximum degree $d$ (which implies that these cannot get an approximation ratio greater than $1/2$), and they showed no deterministic priority algorithm can get an approximation ratio greater than $2/3$
($5/6$ for ``degree based'' randomized algorithms). It should be noted that these inapproximability results for priority algorithms do not hold for the special bipartite case.

H\'astad \cite{Hastad01} showed that it is NP-hard to achieve an approximation ratio of $c$ for the maximum satisfiability problem for any constant $c > 7/8$, and the best known efficient algorithm has an approximation ratio of 0.797 and a conjectured approximation ratio of 0.843 \cite{Avidor2005}. The greedy algorithm that at each step assigns a variable to satisfy the set of clauses with larger weight is an online algorithm achieving an approximation ratio of $1/2$. Azar et al \cite{Azar2011a} observed that this is optimal for deterministic algorithms with input model 0. They showed a randomized greedy algorithm that achieves an approximation ratio of $2/3$ for online submodular max-sat, and they showed that this is optimal for input model 0. In this algorithm, when a variable arrives, the variable is set to true with probability $\frac{w_T}{w_T + w_F}$ and set to false otherwise, where $w_T$ is the weight of clauses satisfied if assigned to true and $w_F$ is the weight of clauses satisfied if assigned to false. For submodular max-sat, the weight is replaced by the marginal gain.

Johnson's algorithm \cite{Johnson} is a deterministic greedy algorithm that bases its decisions on the ``measure of clauses'' satisfied instead of the weights of these clauses. Yanakakis \cite{Yann} showed that Johnson's algorithm is the de-randomization (by the method of conditional expectations) of the naive randomized algorithm and also showed that no deterministic algorithm can achieve a better approximation ratio even in input model 3. Chen et al \cite{Chen1999a} showed that Johnson's algorithm achieves this $2/3$ approximation ratio. The analysis was later simplified by Engebretsen \cite{Engebretsen2004a}. Johnson's algorithm can be implemented in input model 1. Costello et al \cite{Costello2011} showed that Johnson's algorithm achieves an approximation ratio of $2/3+c$ for some $c>0$ in ROM. Poloczek and Schnitger gave an online randomized algorithm in input model 1 achieving an approximation ratio of $3/4$ \cite{Poloczek2011c}. They showed that Johnson's algorithm in ROM gets an approximation ratio of at most $2-\sqrt{15} < 3/4$, and they showed that the online randomized version of Johnson's algorithm (which assigns probabilities according to measures, as in the randomized greedy algorithm) achieves an approximation ratio of at most $17/23 < 3/4$.

Van Zuylen gave a simpler online randomized algorithm \cite{VanZuylen} with approximation ratio $3/4$. Buchbinder et al \cite{Buchbinder2012} gave a randomized algorithm for unconstrained submodular maximization with approximation ratio $1/2$ and additionally a related randomized algorithm for submodular max-sat achieving an approximation ratio of $3/4$. Poloczek \cite{Poloczek2011b} showed that no deterministic adaptive priority for max-sat can achieve an approximation ratio greater than $\frac{\sqrt{33}+3}{12} < 3/4$ in input model 2. He also showed that, under this input model, no randomized online algorithm can get an approximation ratio better than $3/4$, so several algorithms achieving an approximation ratio of $3/4$ that fit the framework are optimal (up to lower order terms). Yung \cite{YungPMS} showed that no deterministic priority algorithm for max-sat in input model 3 can achieve an approximation ratio better than $5/6$.  

By extending the online framework, Poloczek et al \cite{Polo2pass} proposed a deterministic algorithm achieving a $3/4$ approximation ratio that makes two passes over the input: in one pass, the algorithm computes some probabilities for each variable, and in the second pass, it uses these probabilities and the method of conditional expectations to assign variables.

A model for priority width-and-cut was presented and studied by Alekhnovich et al \cite{Alekhnovich2011a}. In particular, they showed that deterministic fixed priority algorithms require exponential width to achieve an approximation ratio greater than $21/22$ for max-sat in input model 3.

\section{Max-sat width results}

\label{sec:max-sat}

We will first show that the Buchbinder and Feldman \cite{Buchbinder2016} 
de-randomization approach can be utillizeid to obtain a $\frac{3}{4}$ 
approximation by a parallel online algorithm of width $2n$. Then we will show that with respect to what we are calling input model 2, that we would need exponential width to improve upon this approximation. Then we will propose a width 2 algotithm  as a plausible candidate to exceed Johnson's online $\frac{2}{3}$ approximation.

\subsection{Derandomizing the Buchbinder et al submodular max-sat algorithm}

Buchbinder et al \cite{Buchbinder2012}  presented a randomized algorithm for submodular max-sat with an approximation ratio of $3/4$. They define a \emph{loose assignment} of a set of variables $V$ as a set $A \subseteq V \times \{0,1\}$. Any variable can be assigned one truth value (0 or 1), none, or both. A clause is satisfied by $A$ if $A$ contains at least one of the literals in the clause. For instance, $V \times \{0,1\}$ will satisfy any clause and $\emptyset$ will satisfy no clause. Let $F$ be the normalized monotone submodular function on sets of clauses (this is part of the input to the problem) and let $g: \mathcal P \left( V \times \{0,1\} \right) \rightarrow \mathbb{R}$ be the function defined by $g(A) = F(C)$ where $C$ is the set of clauses satisfied by the loose assignment $A$. It is easy to check that $g$ is also a monotone submodular function.

The algorithm keeps track of a pair of loose assignments $(X,Y)$, which change every time a new variable is processed. Let $(X_i,Y_i)$ be the values right after the $i$th variable $v_i$ is processed. Initially the algorithm begins by setting $X_0 = \emptyset$ and $Y_0 = V \times \{0,1\}$. When processing $v_i$, $X_i$ will be $X_{i-1}$ plus an assignment $b$ to $v_i$, while $Y_{i}$ will be $Y_{i-1}$ minus the $1-b$ assignment to $v_i$. We say in this case that $v_i$ is assigned to $b$. Thus $X_i$ and $Y_i$ have the same unique assignment for the first $i$ variables, $X_i$ only contains assignments for the first $i$ variables, and $Y_i$ contains all possible assignments for the variables after $v_i$. If there are $n$ variables, $X_n = Y_n$ is a proper assignment, and this is the output of the algorithm.

When processing $v_i$, the algorithm makes a random decision based on the marginal gains of assigning $v_i$ to 0 and of assigning $v_i$ to 1. The value $g(X_{i-1} \cup \{(v_i,0)\}) - g(X_{i-1})$ is how much is gained by assigning $v_i$ to $0$, while $g(Y_{i-1}) - g(Y_{i-1} \setminus \{(v_i,1)\})$ is how much is surely lost by assigning $v_i$ to $0$. Thus, the quantity $f_i := g(X_{i-1} \cup \{(v_i,0)\}) - g(X_{i-1}) + g(Y_{i-1} \setminus \{(v_i,1)\}) - g(Y_{i-1})$ is a value measuring how favourable it is to assign $v_i$ to 0. Similarly, $t_i :=g(X_{i-1} \cup \{(v_i,1)\}) - g(X_{i-1}) + g(Y_{i-1} \setminus \{(v_i,0)\}) - g(Y_{i-1})$ measures how favourable the assignment of $v_i$ to 1 is. In the algorithm presented in \cite{Buchbinder2012}, $v_i$ is assigned to 0 with probability $\frac{f_i}{f_i + t_i}$ and to 1 with probability $\frac{t_i}{f_i + t_i}$ (with some care to avoid negative probabilities).

We now de-randomize this algorithm at the cost of having linear width. The de-randomization idea follows along the same lines as that of Buchbinder and Feldman \cite{Buchbinder2016} for a deterministic algorithm for unconstrained submodular maximization with a $1/2$ approximation ratio. The authors present the novel idea of keeping a distribution of polynomial support over the states of the randomized algorithm. Normally, a randomized algorithm has a distribution of exponential support, so the idea is to carefully choose the states that are kept with nonzero probability. Elements of the domain (or in our case, variables) are processed one at a time, and at each iteration a linear program is used to determine the changes to the distribution. They then argue that they can get rid of the LP's to obtain an efficient algorithm, since solving them reduces to a fractional knapsack problem. The same LP format used for unconstrained submodular maximization works for submodular maxsat (the only change in the linear program in our algorithm below are the coefficients), so the idea in \cite{Buchbinder2016} to get rid of the LP solving also works for our algorithm.

\begin{theorem}\label{submodularMS}
There is a linear-width double-sided online algorithm for submodular max-sat achieving an approximation ratio of $3/4$. The algorithm uses input model 1 of max- sat.
\end{theorem}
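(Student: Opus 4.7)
The plan is to lift the Buchbinder--Feldman derandomization template \cite{Buchbinder2016} from unconstrained submodular maximization to submodular max-sat, maintaining at each step $i$ a probability distribution $D_i$ of linear support over pairs $(X,Y)$ of loose assignments, rather than the single random pair tracked by the algorithm of \cite{Buchbinder2012}. The width of the parallel online algorithm will be exactly the size of this support, and the final output is the best assignment over the surviving states (all of which satisfy $X_n = Y_n$, hence are genuine assignments).

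When $v_{i+1}$ arrives, each pair $(X^{(j)},Y^{(j)})$ in the current support branches into two candidate children corresponding to assigning $v_{i+1}$ to $0$ or $1$, producing at most $2|D_i|$ candidates. I would set the probabilities on these candidates by solving an LP whose variables are the child probabilities and whose constraints are: (a) for every parent $j$, the mass on its two children equals the parent mass $p^{(j)}_i$; (b) a single ``progress'' inequality of the form $\mathbb{E}_{D_{i+1}}[\alpha\, g(X) + \beta\, g(Y)] \ge \mathbb{E}_{D_i}[\alpha\, g(X) + \beta\, g(Y)] + \gamma_i$, with $\alpha,\beta,\gamma_i$ chosen to be exactly the per-step quantities that arise in the Buchbinder et al analysis yielding the $3/4$ bound for submodular max-sat. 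Feasibility is automatic: running the original randomized algorithm independently inside each parent produces such child probabilities, so a feasible point always exists.

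To bound the width, I would move to a basic feasible solution of this LP at every step. Since the LP has $|D_i|+1$ equality/inequality constraints at the binding boundary, a basic feasible solution has at most $|D_i|+1$ positive coordinates, and children with zero mass are simply cut. Starting from $|D_0|=1$ and applying this reduction at each of the $n$ variables gives $|D_n|\le n+1$, which, together with the transient $2|D_i|$ blowup during branching, is the origin of the claimed $2n$ width. Telescoping the progress inequality across $i=1,\dots,n$ reproduces the bound $\mathbb{E}_{D_n}[g(X_n)] \ge \tfrac{3}{4} g(\mathrm{OPT})$, so the best state in the final support witnesses a $3/4$ approximation.

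The main obstacle is verifying that the per-step coefficients $(\alpha,\beta,\gamma_i)$ and the LP invariant really do port over from the USM setting of \cite{Buchbinder2016} to submodular max-sat: one has to rewrite the Buchbinder et al marginal-gain identities for the max-sat function $g$ and check that the same convex combinations of $f_i$ and $t_i$ that underpin their $3/4$ analysis also make the LP feasible with linear-support basic solutions. Once this is established, the remaining items are routine: as in \cite{Buchbinder2016} the LP admits a closed-form fractional-knapsack-style update and needs no LP solver; the marginal gains $g(X\cup\{(v_i,b)\})-g(X)$ and $g(Y)-g(Y\setminus\{(v_i,b)\})$ only involve clauses containing $v_i$ together with their lengths, so the algorithm fits input model~1; and the fact that $Y$ carries assignments to not-yet-seen variables is precisely why querying $\overline{g}$ (i.e.\ the double-sided oracle of \cite{Huang2014}) is required.
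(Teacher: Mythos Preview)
Your high-level plan coincides with the paper's: maintain a small-support distribution over $(X,Y)$ pairs, branch each state on the two values of $v_i$, solve an LP to set the child masses, pass to a basic feasible solution to keep the support linear, and telescope the per-step invariant to recover the $3/4$ bound. The feasibility certificate (plug in the Buchbinder et al.\ randomized probabilities), the double-sided oracle justification, and the input-model-1 remark are all correct and match the paper.

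The genuine gap is in your LP. A \emph{single} progress inequality of the form $\mathbb{E}_{D_{i+1}}[\alpha g(X)+\beta g(Y)]\ge \mathbb{E}_{D_i}[\alpha g(X)+\beta g(Y)]+\gamma_i$ cannot encode the per-step invariant used to get $3/4$. The key lemma compares the \emph{rise} in $\mathbb{E}[g(X)+g(Y)]$ against the \emph{drop} in $\mathbb{E}[g(OPT_i)]$; the latter drop equals $\mathbb{E}_{D_{i-1}}[w\,f_i]$ if $OPT$ sets $v_i$ to $0$ and $\mathbb{E}_{D_{i-1}}[z\,t_i]$ if $OPT$ sets $v_i$ to $1$. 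Since the algorithm does not know $OPT$'s bit on $v_i$, it must enforce \emph{both} comparisons simultaneously, which is why the paper's LP carries the two linear inequalities
\[
\mathbb{E}_{D_{i-1}}[z f_i + w t_i]\ \ge\ 2\,\mathbb{E}_{D_{i-1}}[z t_i],\qquad
\mathbb{E}_{D_{i-1}}[z f_i + w t_i]\ \ge\ 2\,\mathbb{E}_{D_{i-1}}[w f_i],
\]
in addition to the $|D_{i-1}|$ probability equalities. (The Buchbinder--Feldman USM derandomization already needs two such inequalities for the same reason; the paper notes that only the coefficients change.) Consequently an extreme point has at most $|D_{i-1}|+2$ nonzero coordinates, so $|D_i|\le|D_{i-1}|+2$ and $|D_n|\le 2n+1$; your count of $|D_i|+1$ and the resulting $n+1$ width are too optimistic. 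Once you replace your single constraint by these two, the rest of your outline goes through verbatim and coincides with the paper's proof.
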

\begin{proof}
First, we note that an oracle for $F$ suffices for constructing an oracle for $g$. The algorithm keeps track of a distribution over pairs $(X,Y)$ of loose assignments of variables. A double-sided algorithm is needed to obtain the values of the $g(Y)$'s. The idea is to process the variables online, at each step changing the distribution. The pairs $(X,Y)$ satisfy the same properties as in the Buchbinder et al algorithm. Thus, $X$ corresponds to the assignments made in the partial solution so far, while $Y$ corresponds to this plus the set of potential assignments that the partial solution could still make. When all variables are processed, the support will contain proper assignments of variables, and the algorithm takes the best one. The distribution is constructed by using an LP (without an objective function) to ensure some inequalities hold while not increasing the support by too much. We use the notation $(p,X,Y) \in D$ to say the distribution $D$ assigns $(X,Y)$ probability $p$. Also, if $(X,Y) \in supp(D_{i-1})$, we use the notation $Pr_{D_{i-1}}[X,Y]$ to denote the probability of the pair $(X,Y)$ under distribution $D_{i-1}$. The variables are labelled $V = \{v_1, \ldots, v_n\}$ in the online order. See Algorithm~\ref{Submodular max-sat algorithm}.

\begin{algorithm}[!h]\caption{Submodular Max Sat}\label{Submodular max-sat algorithm}
\begin{algorithmic}[1]
	\State Let $D_0 = \{(1, \emptyset, V \times \{0,1\} )\}$
	\For{$i=1$ to $n$}
	\State $\forall (X,Y) \in supp(D_{i-1})$ let
\begin{eqnarray*}
	f_i(X,Y) &=& g(X \cup \{(v_i,0)\}) - g(X) + g(Y \setminus \{(v_i,1)\}) - g(Y)\\
	t_i(X,Y) &=& g(X \cup \{(v_i,1)\}) - g(X) + g(Y \setminus \{(v_i,0)\}) - g(Y)\\
	\end{eqnarray*}
	\State Obtain an extreme point solution for:
	\begin{eqnarray}
	\label{ineq1}\mathrm{E}_{D_{i-1}}[z(X,Y) f_i(X,Y) + w(X,Y) t_i(X,Y)] &\geq & 2\mathrm{E}_{D_{i-1}}[z(X,Y) t_i(X,Y)]\\
	\label{ineq2}\mathrm{E}_{D_{i-1}}[z(X,Y) f_i(X,Y) + w(X,Y) t_i(X,Y)] &\geq & 2\mathrm{E}_{D_{i-1}}[w(X,Y) f_i(X,Y)]\\
	z(X,Y) + w(X,Y) &=& 1 \mbox{ }\forall (X,Y) \in supp(D_{i-1})\label{probabilitieseq1}\\
	z(X,Y), w(X,Y) & \geq & 0 \mbox{ }\forall (X,Y) \in supp(D_{i-1})\label{ineq3}
	\end{eqnarray}
	\State Construct a new distribution:
	\begin{eqnarray*}
	D_i =& \{(z(X,Y) Pr_{D_{i-1}}[X,Y], X \cup \{(v_i,0)\}, Y \setminus \{(v_i, 1)\}): (X,Y) \in supp(D_{i-1}) \} \\
	 & \cup \{(w(X,Y) Pr_{D_{i-1}}[X,Y], X \cup \{(v_i,1)\}, Y \setminus \{(v_i, 0)\}): (X,Y) \in supp(D_{i-1}) \}
	\end{eqnarray*}
	\State\label{DelLA} Delete from $D_i$ any pair of loose assignments with zero probability.
	\EndFor\\
	\Return argmax$_{(X,Y) \in supp_{D_n}}\{g(X)\}$
\end{algorithmic}
\end{algorithm}

As before, $f_i(X,Y)$ is used to determine how profitable it is to assign $v_i$ to 0 in this pair, and similarly $t_i(X,Y)$ measures how profitable it is to assign $v_i$ to 1 in this pair. In normal max-sat, $f_i(X,Y)$ will be the weights of clauses satisfied by assigning $v_i$ to 0 minus the weights of clauses that become unsatisfied by this assignment (a clause becomes unsatisfied when it hasn't been satisfied and all of its variables have been assigned), and $t_i(X,Y)$ is the analogue for the assignment to 1.

Each $(X,Y) \in supp(D_{i-1})$ will potentially be split into two in $D_i$: $(X \cup \{(v_i,0)\}, Y \setminus \{(v_i, 1)\})$, corresponding to assigning $v_i$ to 0 in pair $(X,Y)$, and $(X \cup \{(v_i,1)\}, Y \setminus \{(v_i, 0)\})$, corresponding to assigning $v_i$ to 1 in pair $(X,Y)$. $z(X,Y)$ is the probability of assigning $v_i$ to 0, given $(X,Y)$. Similarly, $w(X,Y)$ is the probability of assigning $v_i$ to 1, given $(X,Y)$. Since at each step $D_i$ could grow twice the size, the LP is used to determine values for $z(X,Y)$, $w(X,Y)$ for all $(X,Y) \in supp(D_{i-1})$ such that the resulting distribution still satisfies the properties used to achieve a good assignment in expectation while forcing many of the variables to be 0. In Step~\ref{DelLA}, the pairs with zero probability are trimmed from $D_i$ to keep the distribution size small.

First, note that the distributions are well defined by induction and by inequalities~\ref{probabilitieseq1} and \ref{ineq3} of the LP. Also, $D_n$ contains well-defined assignments (instead of loose assignments), so the algorithm returns a valid assignment. Let $|D_i|$ be the size of the support of $D_i$. Excluding inequalities~\ref{ineq3} stating the non-negativity of variables, there are $|D_{i-1}|+2$ inequalities in the LP for step $i$, so an extreme point solution contains at most that many nonzero variables: $|D_i| \leq |D_{i-1}|+2$. Therefore, $|D_n| \leq 2n+1$ and the algorithm does have linear width.

Now, let us see that $t_i(X,Y) + f_i(X,Y) \geq 0$ for any $(X,Y) \in supp(D_{i-1})$. It can be proved by induction that for all $1 \leq i \leq n$ and all $(X,Y) \in supp(D_{i-1})$, $X \subseteq Y \setminus \{(v_i,1), (v_i,0)\}$. Then by submodularity,
\[g(X \cup \{(v_i,0)\}) - g(X) \geq g(Y) - g(Y \setminus \{(v_i,0)\})\]
\[g(X \cup \{(v_i,1)\}) - g(X) \geq g(Y) - g(Y \setminus \{(v_i,1)\})\]
Adding both inequalities, moving all terms to the left side and rearranging, we obtain that $t_i(X,Y) + f_i(X,Y) \geq 0$.

We now prove that for every $i$ the LP formed is feasible, which is assumed by the algorithm in order to find an extreme point solution. We give an explicit feasible solution:
\[ z(X,Y) = \frac{\max\{0, f_i(X,Y)\}}{\max\{0, f_i(X,Y)\} + \max\{0, t_i(X,Y)\}} \hspace{12pt} w(X,Y) = 1 - z(X,Y)\]
In case $t_i(X,Y)=f_i(X,Y)=0$, we take $z(X,Y)=1$ and $w(X,Y)=0$. By definition, equalities~\ref{probabilitieseq1} and inequalities~\ref{ineq3} hold. When $t_i(X,Y)=f_i(X,Y)=0$, the corresponding variables will not contribute to either side of inequalities \ref{ineq1} and \ref{ineq2}. Assume either $t_i(X,Y) \neq 0$ or $f_i(X,Y) \neq 0$ and we want to show inequality~\ref{ineq1} (since the other one will be analogous). Let $D:=\max\{0, f_i(X,Y)\} + \max\{0, t_i(X,Y)\}$. Then we want to show:
\[ \mathrm{E}\left[ \frac{\max\{0, f_i(X,Y)\}}{D} f_i(X,Y) + \frac{\max\{0, t_i(X,Y)\}}{D} t_i(X,Y)\right] \geq 2\mathrm{E}\left[\frac{\max\{0, f_i(X,Y)\}}{D} t_i(X,Y)\right] \]
Because $D>0$ this is equivalent to
\[ \mathrm{E}\left[ \max\{0, f_i(X,Y)\} f_i(X,Y) + \max\{0, t_i(X,Y)\} t_i(X,Y)\right] \geq 2\mathrm{E}\left[\max\{0, f_i(X,Y)\}t_i(X,Y)\right]\]
where the expectation is over $D_{i-1}$.

For $(X,Y)\in supp(D_{i-1}) $ for which $f_i(X,Y)<0$, we have $t_i(X,Y)>0$ and the inequality becomes $t_i(X,Y)^2 \geq 0$, which clearly holds. Similarly, when $t_i(X,Y)<0$ we must have $f_i(X,Y)>0$ and the inequality becomes $f_i(X,Y)^2 \geq 2f_i(X,Y)t_i(X,Y)$ which is true because the right hand side is negative. Finally, when $f_i(X,Y)\geq 0$ and $t_i(X,Y) \geq 0$, the inequality becomes $f_i(X,Y)^2+t_i(X,Y)^2 \geq 2 f_i(X,Y)t_i(X,Y)$, which is true because $(a-b)^2 = a^2+b^2-2ab \geq 0$.

Let $OPT$ be an optimal assignment. For any $1\leq i \leq n$ and $(X,Y) \in supp(D_i)$, let $OPT_i(X,Y) := (OPT \cup X) \cap Y$: it is an assignment that coincides with $X$ and $Y$ in the first $i$ variables and coincides with $OPT$ in the rest. We will now prove the following:

\begin{lemma} For $1 \leq i \leq n$:
\[\mathrm{E}_{D_{i-1}}[g(OPT_{i-1}(X,Y))] - \mathrm{E}_{D_{i}}[g(OPT_i(X,Y))] \leq \frac{1}{2}\left( E_{D_i}[g(X) + g(Y)] - E_{D_{i-1}}[g(X) + g(Y)]\right)\]
\end{lemma}
\begin{proof}
First suppose that in OPT, $v_i$ is assigned 0. In this case:
\begin{eqnarray*}
\mathrm{E}_{D_{i-1}}[g(OPT_{i-1}(X,Y))] &-& \mathrm{E}_{D_{i}}[g(OPT_i(X,Y))]  =  \sum_{(X,Y) \in supp(D_{i-1})} [ Pr_{D_{i-1}}[X,Y] g(OPT_{i-1} (X,Y))\\
& - & z_i(X,Y) Pr_{D_{i-1}}[X,Y] g\left(OPT_i(X \cup \{(v_i, 0)\}, Y \setminus \{(v_i,1)\}) \right)\\
& - & w_i(X,Y) Pr_{D_{i-1}}[X,Y] g\left(OPT_i(X \cup \{(v_i, 1)\}, Y \setminus \{(v_i,0)\}) \right) ]\\
& = & \sum_{(X,Y) \in supp(D_{i-1})} w_i(X,Y) Pr_{D_{i-1}}[X,Y] [g(OPT_{i-1}(X,Y))\\
& - & g\left(OPT_i(X \cup \{(v_i, 1)\}, Y \setminus \{(v_i,0)\}) \right) ]
\end{eqnarray*}

Here, we use $z_i$ and $w_i$ to emphasize that these are the extreme point solutions obtained at the $i$th LP. The first equality holds by construction of $D_i$. The second holds because $z_i(X,Y) = 1 - w_i(X,Y)$ and because for all $(X,Y) \in supp(D_{i-1})$, $OPT_{i-1}(X,Y) = OPT_i(X \cup \{(v_i, 0)\}, Y \setminus \{(v_i,1)\})$ since $(v_i,0) \in OPT$.

If $(X,Y) \in supp(D_{i-1})$, then $X \subseteq OPT_{i-1}(X,Y) \setminus \{(v_i,0)\} \subseteq Y \setminus \{(v_i,1)\}$ and by submodularity:
\[g(OPT_{i-1}(X,Y)) - g(OPT_{i-1}(X,Y) \setminus \{(v_i,0)\})  \leq  g(X \cup \{(v_i,0)\}) - g(X)\]
\[g(Y) - g(Y \setminus \{(v_i,1)\})  \leq g(OPT_{i-1}(X,Y) \setminus \{(v_i,0)\} \cup \{(v_i,1)\}) - g(OPT_{i-1}(X,Y) \setminus \{(v_i,0)\})\]

Adding these two inequalities, rearranging, and using the fact that $OPT_{i-1}(X,Y) \setminus \{(v_i,0)\} \cup \{(v_i,1)\} = OPT_i(X \cup \{(v_i, 1)\}, Y \setminus \{(v_i,0)\}) $ we obtain:
\begin{eqnarray*}
g(OPT_{i-1}(X,Y)) & -& g(OPT_i(X \cup \{(v_i, 1)\}, Y \setminus \{(v_i,0)\})) \\
& \leq & g(X \cup \{(v_i,0)\}) - g(X) - g(Y) + g(Y \setminus \{(v_i,1)\})\\
& = & f_i(X,Y)
\end{eqnarray*}

Thus, we conclude:
\begin{eqnarray*}
\mathrm{E}_{D_{i-1}}[g(OPT_{i-1}(X,Y)] &-& \mathrm{E}_{D_{i}}[g(OPT_i(X,Y)]\\
& \leq &\sum_{(X,Y) \in supp(D_{i-1})} w_i(X,Y) Pr_{D_{i-1}}[X,Y] f_i(X,Y)\\
& = & \mathrm{E}_{D_{i-1}} [ w_i(X,Y) f_i(X,Y)]
\end{eqnarray*}

Analoguously, when $(v_i,1) \in OPT$ we obtain:
\[ \mathrm{E}_{D_{i-1}}[g(OPT_{i-1}(X,Y)] - \mathrm{E}_{D_{i}}[g(OPT_i(X,Y)] \leq \mathrm{E}_{D_{i-1}} [ z_i(X,Y) t_i(X,Y)]\]

On the other hand, we have:
\begin{eqnarray*}
\frac{1}{2} ( \mathrm{E}_{D_i} [g(X) &+& g(Y)] - \mathrm{E}_{D_{i-1}} [g(X) + g(Y)] ) = \frac{1}{2} ( \mathrm{E}_{D_{i-1}} [z_i(X,Y) g(X \cup \{(v_i,0)\})\\
& +&  w_i(X,Y) g(X \cup \{(v_i,1)\}) + z_i(X,Y) g(Y \setminus \{(v_i,1)\}) + w_i(X,Y) g(Y \setminus \{(v_i,0)\})\\
& -& g(X) - g(Y) ] )\\
& = & \frac{1}{2} ( \mathrm{E}_{D_{i-1}}[ z_i(X,Y)(g(X \cup \{(v_i,0)\}) + g(Y \setminus \{(v_i,1)\}) - g(X) - g(Y))\\
& + & w_i(X,Y)(g(X \cup \{(v_i,1)\}) + g(Y \setminus \{(v_i,0)\}) - g(X) - g(Y))])\\
& = & \frac{1}{2}( \mathrm{E}_{D_{i-1}}[z_i(X,Y)f_i(X,Y) + w_i(X,Y)t_i(X,Y)])\\
& \geq & \max\{ \mathrm{E}_{D_{i-1}} [ w_i(X,Y) f_i(X,Y)], \mathrm{E}_{D_{i-1}} [ z_i(X,Y) t_i(X,Y)]\}
\end{eqnarray*}
where the first equality is by how $D_i$ is constructed and the last inequality is because of inequalities \ref{ineq1} and \ref{ineq2} from the LP.
\end{proof}

To conclude the proof of the theorem, we add the inequalities given by the lemma for $1 \leq i \leq n$, obtaining:
\[ \mathrm{E}_{D_0}[g(OPT_0(X,Y))] - \mathrm{E}_{D_n}[g(OPT_n(X,Y))] \leq \frac{1}{2}\left( E_{D_n}[g(X) + g(Y)] - E_{D_0}[g(X) + g(Y)]\right) \]

Notice that $\mathrm{E}_{D_0}[g(OPT_0(X,Y))] = g(OPT)$, $E_{D_0}[g(X)] = g(\emptyset)$, $E_{D_0}[g(Y)] = g(V \times \{0,1\})$, and for all $(X,Y) \in supp(D_n)$, $X = Y = OPT_n(X,Y)$. Therefore the inequality becomes
\[ g(OPT) - \mathrm{E}_{D_n}[g(X)] \leq \frac{1}{2} \left( 2E_{D_n}[g(X)] - g(\emptyset) - g(V \times \{0,1\})\right)\]

Therefore, after rearranging we get:
\[ \mathrm{E}_{D_n}[g(X)] \geq \frac{1}{2} g(OPT) + \frac{1}{4}[ g(\emptyset) + g(V \times \{0,1\}) ] \geq \frac{3}{4} g(OPT)\]
The last inequality follows from the fact that $F$ is normalized (so $g(\emptyset) = 0$) and monotone (so $g(V \times \{0,1\}) \geq g(OPT)$). Calling the algorithm's ouput assignment $A$, we conclude that
\[F(A) = g\left(\underset{(X,Y) \in supp_{D_n}}{\operatorname{argmax}} \{g(X)\}\right) \geq  \mathrm{E}_{D_n}[g(X)] \geq \frac{3}{4} g(OPT) = \frac{3}{4} F(OPT)\]

We note that the LP format is the same as that in \cite{Buchbinder2016}. The only difference with their LP is the coefficients. So their argument that this can be solved by viewing it as a fractional knapsack problem still holds.
\end{proof}

\subsection{Online width inapproximation bounds for max-2-sat}

We now present width impossibility results for max-sat with respect to different  input models. The best known efficient algorithm for max-sat has an approximation ratio of 0.797 \cite{Avidor2005}. Recall that Johnson's algorithm for max-sat \cite{Johnson} achieves a $2/3$ approximation ratio \cite{Chen1999a} and only requires the algorithm to know the lengths of the clauses; i.e. input model 1. Even for input model 2, we show in Theorem~\ref{maxsat2i} that exponential width-cut is required to improve upon the $3/4$ approximation ratio achieved by Algorithm~\ref{Submodular max-sat algorithm}, which is a linear width algorithm in input model 1. In Theorem~\ref{maxsat0}, we show that constant width algorithms cannot achieve an approximation ratio of 2/3 in input model 0. This shows that constant width is unable to make up for the power lost if the algorithm does not know the lengths of clauses or if the algorithm is required to be deterministic;  
we note that the randomized algorithm using probabilities proportional to weights achieves an approximation ratio of $2/3$ \cite{Azar2011a}. Finally, in Theorem ~\ref{maxsat3} we show that, in input model 3, exponential width is required to achieve an approximation ratio greater than $5/6$.

Our impossibility results hold even in some special cases of max-sat. In max-$q$-sat, the instance is guaranteed to have clauses of length at most $q$. Exact max-$q$-sat is the case where all clauses are of length exactly $q$. In the following theorem, we show that for input model 2, exponential width-cut cannot achieve a better approximation ratio than that achieved in Theorem~\ref{submodularMS}, even for exact max-2-sat. It should be noted that a $3/4$ approximation ratio is already achieved by the naive randomized algorithm (that sets a variable to $0$ (or $1$) with probability $1/2$) and by its de-randomization, Johnson's algorithm, for exact max-2-sat.

We say that a max-of-$k$ algorithm \emph{assigns} (or \emph{sets}) $x$ to $(b_1, b_2, \ldots, b_k) \in \{0,1\}^k$ if it assigns $x$ to $b_1$ in its first assignment, it assigns $x$ to $b_2$ in its second assignment, etc.

\begin{theorem}\label{maxsat2i}
For any $\epsilon > 0$ there exists $\delta > 0$ such that, for $k<e^{\delta n}$, no online width-cut-$k$ algorithm can achieve an asymptotic approximation ratio of $3/4 + \epsilon$ for unweighted exact max-2-sat with input model 2.
\end{theorem}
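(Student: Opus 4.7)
The plan is to exhibit, against any width-cut-$k$ algorithm with $k\le e^{\delta n}$, a hard exact max-2-sat instance in input model 2 on which $\mathrm{OPT}$ is (nearly) $m$ while every leaf of the algorithm's width-cut tree satisfies at most $(3/4+\epsilon)m$ clauses. I would prove this by combining (i) a single-algorithm tail bound against a randomized adversary with (ii) a union bound over the $\le k$ leaves, noting that each leaf of a width-cut-$k$ tree is itself a deterministic online algorithm --- a deterministic function of the observations along its root-to-leaf path --- so proving $\Pr_{\mathcal D}[\mathrm{sat}(A)>(3/4+\epsilon)m]\le e^{-c(\epsilon)n}$ for every fixed deterministic online $A$ suffices to win by union bound provided $\delta<c(\epsilon)$.

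For the hard distribution $\mathcal D$ I would try a bipartite structure: split the $n$ variables into sides $U,W$ of size $n/2$, fix a $d$-regular bipartite graph between them with $d=d(\epsilon)$ a suitably large constant (so $m=dn/2$), and present the variables in the order $u_1,\dots,u_{n/2},w_1,\dots,w_{n/2}$. The sign distribution is chosen to be simultaneously nearly satisfiable (so $\mathrm{OPT}\ge(1-o(1))m$) and ``locally symmetric'' enough that the signs observed at any $u_i$'s arrival carry only $O(1)$ effective bits of information about the good completion of $\sigma_{u_i}$ --- for instance, a planted-assignment construction with uniform-over-the-three-satisfying-patterns clause signs, possibly admixed with a small measure of unsatisfying patterns to blunt XOR leakage. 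Because the arrival order puts all of $U$ before any of $W$, each $U$-decision is made with only marginal per-variable information and each random sign in clause $(u_i,w_j)$ influences only one downstream commitment.

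The per-algorithm analysis then proceeds by a two-stage counting. In the $U$-phase, each $\sigma_{u_i}$ satisfies at most $\max(X_i,d-X_i)\le d/2+O(\sqrt d)$ of $u_i$'s $d$ clauses, where $X_i$ is the number of positive $u_i$-signs, so $u$-literals cover at most $\tfrac{dn}{4}+O(n\sqrt d)$ clauses and leave $\tfrac{dn}{4}-O(n\sqrt d)$ ``needy'' ones. In the $W$-phase, each $\sigma_{w_j}$ can save at most $Y_j/2+O(\sqrt{Y_j})$ of its $Y_j$ needy clauses, since signs in needy clauses are random enough to prevent perfect matching. Summing yields $\mathbb E[\mathrm{sat}(A)]\le(\tfrac34+O(1/\sqrt d))m$, and Azuma--Hoeffding with bounded-differences constant $O(d)$ (from the single-cascade $U$-then-$W$ order) upgrades this to $\Pr[\mathrm{sat}(A)>(3/4+\epsilon)m]\le\exp(-\Omega(\epsilon^2 n))$ once $d$ is chosen of order $1/\epsilon^2$. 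The theorem then follows by union-bounding this tail over the $\le k\le e^{\delta n}$ leaves with $\delta$ smaller than the Azuma exponent.

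The main technical obstacle is ensuring that the per-variable ``effective information'' really is small enough to force $\sigma_{u_i}$ towards a coin-flip: in any planted distribution, the joint signs of a clause leak $\tau_u\oplus\tau_w$, and chaining such XOR leaks along short cycles in the clause graph could in principle let the algorithm recover $\tau$ up to a global flip and satisfy nearly every clause. Handling this likely requires taking $G$ to have girth $\omega(1)$ (e.g.\ a random high-girth $d$-regular bipartite graph, so that the algorithm's observation neighbourhoods are tree-like on the scale of any Bayesian inference it can perform), or else introducing a controlled amount of unsatisfying sign patterns whose cost to $\mathrm{OPT}$ is bounded by $\epsilon$ but whose effect is to break the XOR chain. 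Making these two quantitative handles meet in the middle --- $\mathrm{OPT}$ loses $o(1)\cdot m$, algorithmic per-clause gain stays bounded by $O(1/\sqrt d)$ --- is the main subtlety of the proof.
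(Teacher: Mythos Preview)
Your high-level strategy---draw the instance from a distribution, prove an exponential tail bound for any single deterministic online algorithm, then union-bound over the $\le k$ leaves of the width-cut tree---is exactly the right shape and is what the paper does. The gap is in your concrete construction, and it is a real one.

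The tension you identify as the ``main technical obstacle'' is in fact fatal for the bipartite $d$-regular design as you describe it. In input model~2 the algorithm sees, at the arrival of $u_i$, the \emph{sign of $u_i$} in each of its $d$ clauses (it only fails to see the sign of the partner $w_j$). If you plant an assignment $\tau$ and choose clause signs uniformly over the three satisfying patterns, then the $u_i$-sign is positive in a $2/3$ fraction of clauses when $\tau_{u_i}=1$ and a $1/3$ fraction when $\tau_{u_i}=0$: the algorithm reads off $\tau_{u_i}$ from this bias immediately, and your claimed bound $\max(X_i,d-X_i)\le d/2+O(\sqrt d)$ fails (it is $\approx 2d/3$). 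Conversely, if you make the $u_i$-signs unbiased by taking all four patterns uniformly, then the instance is a uniformly random exact-$2$-CNF, $\mathrm{OPT}$ is only $(3/4+o(1))m$, and the $3/4$ ratio is achieved trivially. The ``small admixture'' you propose cannot fix both ends simultaneously: any mixture that keeps $\mathrm{OPT}\ge(1-o(1))m$ still has $\Theta(1)$ sign bias. A second, independent problem is your $W$-phase: by the time $w_j$ arrives, the algorithm already learned the $u_i$-sign in clause $(u_i,w_j)$ back when $u_i$ arrived, and it now learns the $w_j$-sign, so it knows the clause in full and can set $w_j$ optimally given its $U$-commitments---your ``random enough'' claim for needy-clause signs is false.

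The paper avoids all of this with a much simpler construction that you should regard as the degenerate case $|W|=1$, together with the crucial device of using \emph{pairs} of clauses rather than single clauses. There is one hub variable $y$; each $x_i$ appears in exactly two clauses, one with $x_i$ and one with $\overline{x_i}$, and in both the other literal is $y$ with a sign the adversary has not yet committed to. In input model~2 the data item for every $x_i$ is therefore \emph{identical}: ``one positive clause with $y$, one negative clause with $y$.'' There is zero information, so the algorithm's $k$ partial assignments to $(x_1,\dots,x_n)$ are just $k$ fixed strings in $\{0,1\}^n$. A single Chernoff-plus-union-bound over a uniformly random target string $A$ now finds an $A$ within Hamming distance $n/2\pm 2\epsilon n$ of all $k$ of them whenever $k<e^{\Theta(\epsilon^2)n}$; the adversary then sets the sign of $y$ in the $x_i$-clauses so that the two clauses encode $x_i\equiv y$ if $A(x_i)=1$ and $x_i\not\equiv y$ otherwise. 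The instance is satisfiable, but every branch of the algorithm, whatever it does with $y$, satisfies fewer than $n/2+2\epsilon n$ of the $n$ (in)equivalences, hence fewer than $(3/4+\epsilon)\cdot 2n$ clauses. The extension from max-of-$k$ to width-cut-$k$ is immediate because the adversary's only data-dependent choice is made at the very last variable $y$. No girth, no planted-vs-random trade-off, no Azuma, no $d$ to tune.
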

\begin{proof}
First, we will show a concrete example where any width-2 algorithm achieves an approximation ratio of at most $3/4$. Then we show a way to extend this to a $3/4+\epsilon$ asymptotic inapproximation with respect to the max-of-$k$ model for $k$ that is exponential in the number of variables. Finally, we briefly argue why this impossibility result will also hold in the more general width-cut-$k$ case.

Suppose $k=2$. The adversary begins by showing variable $x_1$: it appears positively in one clause and negatively in another clause, both of length 2. The remaining variable in both clauses is $y$. If the algorithm does not branch or sets the variable $x_1$ to $0$ or to $1$ in both assignments, the adversary can force a $3/4$ approximation ratio as follows. Suppose without loss of generality that in both assignments $x_1$ is assigned $1$. Then the adversary presents the instance: $(x_1 \vee \overline{y}) \wedge (\overline{x_1} \vee y) \wedge (\overline{y} \vee z) \wedge (\overline{y} \vee \overline{z})$. No assignment where $x_1$ is set to 1 can satisfy all clauses, but an assignment where $x_1$ and $y$ are set to $0$ satisfies all clauses. Thus, the algorithm achieves an approximation ratio of at most $3/4$.

Therefore the algorithm must set one assignment to $1$ and the other to $0$. We assume that the algorithm sets $x_1$ to $(1,0)$. Now the adversary presents a variable $x_2$, where again there is one clause where it appears positively and one where it appears negatively, both of length 2 and where the remaining variable is $y$. Then there are four cases depending on the decision of the algorithm on $x_2$ (in each case, the whole instance consists of four clauses in total):

\begin{tabular}{ l | l l l l }
  Decision on $x_2$ & $(0,0)$ & $(1,0)$ & $(1,1)$ & $(0,1)$ \\
  \hline
  Clause with $x_1$ & $x_1 \vee \overline{y}$ & $x_1 \vee \overline{y}$ & $x_1 \vee \overline{y}$ & $x_1 \vee y$ \\
  Clause with $\overline{x_1}$ & $\overline{x_1} \vee \overline{y}$ & $\overline{x_1} \vee y$ & $\overline{x_1} \vee \overline{y}$ & $\overline{x_1} \vee \overline{y}$ \\
  Clause with $x_2$ & $x_2 \vee y$ & $x_2 \vee y$ & $x_2 \vee \overline{y}$ & $x_2 \vee y$ \\
  Clause with $\overline{x_2}$ & $\overline{x_2} \vee \overline{y}$ & $\overline{x_2} \vee \overline{y}$ & $\overline{x_2} \vee y$ & $\overline{x_2} \vee \overline{y}$ \\
\end{tabular}

In all cases, the algorithm will only be able to satisfy 3 out of the 4 clauses in any of its branches, but the instance is satisfiable, so the inapproximation holds.

Now let us show how to extend this idea to max-of-$k$. Let $\epsilon > 0$, take $\delta = 7 \epsilon^2$, so that $k < e^{\delta n} = e^{7 \epsilon^2 n}$. The adversary will present variables $x_1, \ldots, x_{n}$, where each $x_i$ appears in two clauses of length 2 and where the remaining variable is $y$: in one $x_i$ appears positively and in the other negatively. In fact, the two clauses will either represent an equivalence to $y$ (given by $x_i \vee \overline{y}$, $\overline{x_i} \vee y$) or an inequivalence to $y$ (given by $x_i \vee y$, $\overline{x_i} \vee \overline{y}$), but the algorithm does not know which is the case. If an assignment does not satisfy the (in)equivalence correctly, it will get only one of the two clauses (ie $1/2$ of the total).

Suppose that the algorithm maintains $k$ assignments, and suppose it makes assignments on $x_1, \ldots, x_{n}$. Then the algorithm can only maintain at most $k$ of the possible $2^{n}$ assignments. For a fixed assignment of $x_1, \ldots, x_{n}$, by Chernoff bounds, the probability that a uniformly random assignment agrees with the fixed one on at least $n/2+2\epsilon n$ variables is at most $e^{-8 \epsilon^2 n}$. Similarly, the probability that it agrees with the fixed assignment on at most $n/2 - 2\epsilon n$ variables is at most $e^{-8 \epsilon^2 n}$. Thus, by union bounds, the probability that any of the two possibilities occurs on any of the $k$ assignments maintained by the algorithm is at most $2ke^{-8 \epsilon^2 n}< e^{-\epsilon^2n +ln2}<1$. So there exists an assignment $A$ that agrees with every assignment maintained by the algorithm on more than $n/2-2\epsilon n$ but less than $n/2+2\epsilon n$ of the variables.

The adversary uses this assignment $A$ to determine the signs of $y$ in the clauses, which in turn determines for each $i$ whether $x_i$ is equivalent or inequivalent to $y$. If $A$ assigns $x_i$ to 1, then the adversary says $x_i$ is equivalent to $y$. If $A$ assigns $x_i$ to 0, then the adversary says that $x_i$ is inequivalent to $y$. Clearly, the set of clauses constructed is satisfiable. Fix one of the $k$ assignments maintained by the algorithm. If to complete this assignment the algorithm sets $y$ to 1, the number of (in)equivalences satisfied by the assignment is equal to the number of variables where $A$ and this assignment agree, which is less than $n/2+2\epsilon n$. On the other hand, if to complete the assignment the algorithm sets $y$ to 0, then the number of (in)equivalences satisfied is equal to the number of variables where $A$ and this assignment disagree, which again is less than $n/2+2\epsilon n$. Since an assignment that satisfies $q$ (in)equivalences will satisfy a $\frac{n+q}{2n}$ fraction of the clauses, the approximation ratio achieved by the algorithm is less than $3/4 + \epsilon$.

It is easy to see why this result will also hold for width-cut: the only decisions of the adversary that depend on the branching are made when the last variable $y$ is being processed (their signs are determined in each clause by assignment $A$). So the adversary can use the strategy that corresponds to the assignments of the algorithm right before $y$ is presented. Any branching or cutting made when deciding the assignments for $y$ are irrelevant: since it's the last variable, the algorithm should just assign $y$ to maximize the number of satisfied clauses in each assignment.
\end{proof}

\begin{theorem}\label{maxsat0}
For any constant $k$, the asymptotic approximation ratio achieved by any online width-$k$ algorithm for unweighted max-sat with input model 0 is strictly less than $2/3$.
\end{theorem}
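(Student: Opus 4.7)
The plan is to adapt the structure of the argument used for Theorem~\ref{maxsat2i}, but calibrated for constant width and tuned to the $2/3$ threshold rather than the $3/4$ threshold. I will first exploit the very limited information available in input model 0: the algorithm sees only clause names, weights, and the sign of the current variable in each clause, and in particular does not see clause lengths nor the identities of other variables in a clause until those variables are themselves presented. By having the adversary present $n$ variables in a uniform manner---each with an identical signature (say $1$ positive and $1$ negative occurrence), and all clause names chosen freshly---I will argue that every branch of the width-$k$ tree sees the exact same history at each decision point. Consequently the algorithm's behaviour collapses to outputting at most $k$ predetermined bitstrings $B_1, \dots, B_k \in \{T,F\}^n$, so that width-$k$ is effectively no more powerful than max-of-$k$ on this input stream.

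Next, a Chernoff-plus-union-bound argument analogous to the one in Theorem~\ref{maxsat2i} but substantially weaker (since $k$ is now constant, not exponential in $n$) produces a target assignment $A^* \in \{T,F\}^n$ such that each of the $k$ bitstrings $B_j$ agrees with $A^*$ on at most $(1/2 + \epsilon) n$ coordinates, for any desired constant $\epsilon > 0$ and all sufficiently large $n$. The adversary then finalizes the instance in a second phase by introducing auxiliary variables that plug into the already-named (and length-hidden) clauses. The auxiliary variables, together with signs chosen relative to $A^*$, are designed so that the number of clauses satisfied by any bitstring $B$ (after any adaptive choice of the auxiliaries) is a controlled monotone function of the Hamming agreement between $B$ and $A^*$. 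The gadget should be tuned so that agreeing with $A^*$ everywhere gives OPT (ratio $1$), while an agreement of roughly $n/2$ asymptotically saturates the $2/3$ bound. Then each $B_j$, forced to have agreement at most $(1/2 + \epsilon)n$, attains ratio at most $2/3 - \delta_k$ for some $\delta_k > 0$ depending only on the constant $k$ through $\epsilon$.

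The main obstacle is the explicit construction of the second-phase gadget. It must simultaneously (i) be consistent with the identical-signature, fresh-clause data items of the first phase; (ii) respect the unweighted restriction; (iii) interact correctly with the length-hiding feature of input model 0; and (iv) realize a tight $2/3$ threshold against bitstrings at linear Hamming distance from $A^*$, mirroring the randomized $2/3$ upper bound of Azar et al.~\cite{Azar2011a}. Because the auxiliary variables reveal pairings with previously named clauses as soon as they are processed, the gadget must be structured so that no adaptive choice of the auxiliaries can salvage more than $2/3 + o(1)$ of the clauses once the $x$-bitstring deviates from $A^*$ by a linear amount. I expect this to require mixing clause lengths (which is permissible precisely because length is hidden in model 0) and balancing how many clauses each auxiliary variable can simultaneously rescue, in the spirit of the tight instances for Azar et al.'s randomized analysis.
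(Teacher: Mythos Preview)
There is a genuine gap, and it is exactly the obstacle you flag: no second-phase gadget of the kind you envision can push the bound below $3/4$, so the Chernoff route cannot reach $2/3$. The reason is structural. For the Chernoff step to apply, the first-phase signature of each $x_i$ must be symmetric (equal numbers of positive and negative occurrences), so that the adversary remains free to set $A^*_i$ either way after the branches commit. But once $A^*_i$ is fixed (say to $1$) and OPT is to satisfy every clause, every negative $x_i$-clause must contain an auxiliary literal, while the positive ones can be made unit. A disagreeing branch ($x_i=0$) then automatically satisfies all negative $x_i$-clauses, i.e.\ at least half of the clauses attached to $x_i$; any auxiliary-only clauses you add are all satisfiable (OPT does so), so they cannot drive the per-variable disagreement ratio below $1/2$. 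At Hamming agreement $\alpha\approx 1/2$ you therefore get ratio at least $\tfrac12\cdot 1+\tfrac12\cdot\tfrac12=\tfrac34$---precisely the threshold of Theorem~\ref{maxsat2i}. Going lower would require the adversary to also force bad \emph{auxiliary} choices on every branch simultaneously, but on each variable the branches split into two camps, and a single fixed auxiliary instance cannot trap both camps at once; doing so forces the adversary to tailor the auxiliary construction to the pattern of branch values on each variable, which is no longer a single-shot Chernoff argument.

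The paper's proof is accordingly quite different: it is an induction on $k$ with no Chernoff step. One presents many variables with symmetric signatures; after the branches commit, the variables are bucketed by the full pattern $S\in\{0,1\}^k$ of values the $k$ branches assigned. Within a bucket $V_S$, the branches indexed by $S$ have already satisfied one side, so only the $|S^C|<k$ remaining branches are ``live'' on those clauses, and the adversary recursively applies the max-of-$|S^C|$ strategy there (and symmetrically for the complementary side, with a carefully chosen weighting between the two parts). The base case $k=2$ already yields $5/8$, and the recursion keeps every constant-$k$ ratio strictly below $2/3$. Your observation that uniform first-phase data items make the width-$k$ decision tree instance-independent is correct and would give a cleaner width-to-max-of-$k$ reduction than the paper's adversary-switching argument, but it does not rescue the Chernoff route to the $2/3$ bound.
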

\begin{proof}
We start by giving a max-of-$k$ inapproximation result, which is then easily extended to width. It should be noted that for this result we need to allow the adversary's final instance to contain repeated equal clauses.

First consider the case $k=2$. The adversary presents a variable $x_1$. There are two clauses where it appears positively and two where it appears negatively. Without loss of generality, there are two options: both assignments set $x_1$ to 1 or the first sets $x_1$ to 1 and the second sets $x_1$ to 0. In the former case, the adversary proceeds to say that the clauses containing $\overline{x_1}$ were of length one but the clauses containing $x_1$ had an additional variable $y$, which means both assignments satisfy half of the clauses. In the latter case, the adversary now presents variable $y_1$. It appears positively in one of the clauses where $\overline{x_1}$ appears and it appears negatively in the other clause where $\overline{x_1}$ appears. The value the second assignment gives to this variable is irrelevant since it already satisfied these clauses. Without loss of generality, assume the first assignment sets it to 1. Then the adversary presents a variable $z_1$, which occurs only positively in the clause where $y_1$ appears positively. Thus the values both assignments give to this variable are irrelevant. The first assignment satisfied three of the four clauses while the second only satisfied two of them. There is an optimal solution satisfying all four: set $x_1$ to 1, $y_1$ to 0, $z_1$ to 1. Now, the adversary repeats this process, but reversing the roles of the two assignments so that now the first assignment only gets two out of four clauses and the second gets three. Adding up, both assignments get five out of the eight clauses and the optimal value is 8, so we get a $5/8$ inapproximation.

For the general (but constant) $k$ case, we proceed by induction to prove that there is an adversary giving an inapproximation ratio strictly less than $2/3$. Recall that online maxsat ($k=1$) in this model cannot get an approximation ratio better than $1/2$ \cite{Azar2011a} (it is easy to extend this to an asymptotic inapproximation). Suppose there is an adversarial strategy for $i<k$. We now present a strategy for max-of-$k$. The adversary begins by presenting many variables $v_1, v_2, \ldots, v_n$, each of which will have many clauses where it appears positively and the same number of clauses where it appears negatively. For each $1 \leq i \neq j \leq n$, the clauses where $v_i$ and $v_j$ appear are disjoint. These will be all of the clauses of the instance: the adversary will not present any new clauses later on. It will only present additional variables contained within these clauses.

After decisions are made there will be $k$ assignments, each assigning a value $0$ or $1$ to each of the variables. For every $S \in \{0,1\}^k$, the adversary will recursively apply its strategies for max-of-$i$ (for values $i<k$) to $V_S$, the set of variables from $v_1, v_2, \ldots, v_n$ assigned to $S$, and to the clauses where variables in $V_S$ appear. More precisely, let $\mathbb{A}_j$ be the adversary for max-of-$j$. Then $\mathbb{A}_k$ will simulate $\mathbb{A}_i$ by ignoring some of the assignments (since now it can only consider $i$ of them). Variables created by $\mathbb{A}_i$ will be new variables. When $\mathbb{A}_i$ creates new clauses, $\mathbb{A}_k$ uses some of the clauses where variables in $V_S$ appear instead of creating new ones. If $S=(1,\ldots,1)$, for these variables the adversary says that the clauses where they appear positively have length two and include a new variable $y$ but the clauses where they appear negatively have length one. Thus the algorithm can only satisfy $1/2$ of these clauses but the set of clauses is satisfiable. The response is analogous if $S=(0,\ldots,0)$.

Now suppose that $S = (s_1, \ldots, s_k)$ contains $t$ 1's and $k-t$ 0's, for some for $0<t<k$. Let $S_1 := \{1 \leq j \leq k | s_j = 1\}$ and let $S_0:= \{1 \leq j \leq k | s_j = 0\}$, so $|S_1| = t$ and $|S_2| = k-t$. The adversary will roughly split $V_S$ into two parts, one $V_S^1$ of size $w|V_S|$ and the other $V_S^2$ of size $(1-w)|V_S|$, for a $w$ to be determined. In the first part, the adversary will say that the positive clauses (where variables in $V_S^1$ appear positively) were of length one, and it will simulate $\mathbb{A}_t$ using the negative clauses, so the decisions made on the $k-t$ assignments indexed by $S_0$ don't matter and the adversary only considers the decisions of the algorithm on the $t$ assignments indexed by $S_1$. In the second part, the adversary will say that negative clauses were of length one, and will simulate $\mathbb{A}_{k-t}$ using the positive clauses and considering only the $k-t$ assignments indexed by $S_0$. See Table~\ref{VS12Example} for an example. Let $r_1$ be the inapproximation ratio for max-of-$t$ and let $r_2$ be the inapproximation ratio for max-of-$(k-t)$. Let $a = (1+r_1)/2$ and $b=(1+r_2)/2$. Then the proportion of clauses satisfied by assignments indexed by $S_1$ is at most $aw + 1/2(1-w)$ and the proportion of clauses satisfied by assignments indexed by $S_0$ is at most $1/2w + b(1-w)$. We select $w$ to minimize the maximum between these two amounts, by equating these values:

\[aw + \frac{1}{2}(1-w) = \frac{1}{2}w + b(1-w) \]

We solve this equation, obtaining

\[w = \frac{b-\frac{1}{2}}{a+b-1}\]

Plugging back in into the equality and considering there is an optimal assignment satisfying all clauses, we obtain an inapproximation ratio of

\[\frac{2ab - \frac{1}{2}}{2(a+b-1)}\]

\begin{table}[!h]
\begin{tabular}{ l l l | l l l }
$V_S^1$ & & & $V_S^2$ & &\\
\hline
& $v_1$ & Clause of length 1 & & $v_4$ & Used to simulate $\mathbb{A}_{k-t}$\\
& $v_2$ & Clause of length 1 & & $v_5$ & Used to simulate $\mathbb{A}_{k-t}$\\
& $v_3$ & Clause of length 1 & & $v_6$ & Used to simulate $\mathbb{A}_{k-t}$\\
& $\overline{v_1}$ & Used to simulate $\mathbb{A}_t$ & & $\overline{v_4}$ & Clause of length 1\\
& $\overline{v_2}$ & Used to simulate $\mathbb{A}_t$ & & $\overline{v_5}$ & Clause of length 1\\
& $\overline{v_3}$ & Used to simulate $\mathbb{A}_t$ & & $\overline{v_6}$ & Clause of length 1
\end{tabular}
\caption{Example where $V_S$ consists of 6 variables, 3 in $V_S^1$ and 3 in $V_S^2$.}
\label{VS12Example}
\end{table}

Now, recall $r_1$, $r_2$ are inapproximation ratios for max-of-$k'$ for $k'<k$, so by the induction hypothesis $1/2 \leq r_1, r_2 < 2/3$, which implies $3/4 \leq a,b < 5/6$. Given these parameters, it can be shown that the above ratio gets a value strictly less than $2/3$. Since all ratios are less than $2/3$ regardless of $S$, the inapproximation obtained overall is less than $2/3$. Notice that we assume that we can neglect the gain obtained when $|V_S|$ is not large enough to apply the recursive strategy (hence the number of initial variables $n$ has to be large) and in addition we are assuming we can at least approximate $w$ accurately (hence the number of clauses per $v_i$ has to be large).

To extend this result to width $k$, we can begin by assuming that $k'=1$ where $k'$ is the maximum number of assignments the algorithm keeps. We start by applying the adversary for max-of-$k'$. If the adversary finishes before the algorithm does any splitting then we are done. Otherwise the algorithm splits to now maintain $k'' > k'$ assignments and we apply the adversary for max-of-$k''$, but using many more clauses so that we ensure that the clauses used by previous adversaries will be negligible when calculating the approximation ratio.
\end{proof}

\begin{theorem}\label{maxsat3}
For any $\epsilon > 0$ there exists $\delta > 0$ such that, for $k<e^{\delta n}$, no online width-cut-$k$ algorithm can achieve an asymptotic approximation ratio of $5/6 + \epsilon$ for unweighted max-2-sat with input model 3.
\end{theorem}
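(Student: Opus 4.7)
The plan is to follow the same template as Theorem~\ref{maxsat2i}: design a base gadget that achieves a $5/6$ inapproximation against a single online algorithm in input model $3$, replicate it many times using shared ``hub'' variables whose signs the adversary commits to only at the very end, and then use a Chernoff-plus-union-bound argument to extend the impossibility to max-of-$k$ for $k < e^{\delta n}$. The lift from max-of-$k$ to width-cut-$k$ is then immediate since, as in Theorem~\ref{maxsat2i}, every branch-sensitive decision of the adversary is concentrated on the final hub variables, so any cutting or branching done on earlier variables only reduces the number of distinct spoke-assignments the adversary has to defeat.

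The base gadget must yield a $5/6$ inapproximation in input model $3$. The equivalence/inequivalence ambiguity used for $3/4$ in Theorem~\ref{maxsat2i} is no longer available, because in model~$3$ the algorithm learns the signs of all other variables in every clause containing the arriving variable, and so can immediately tell whether a spoke/hub pair is an equivalence or an inequivalence. The fix is to have each spoke variable $x_i$ participate in a constant-size collection of length-$2$ clauses involving several hub variables, with the collection arranged so that for every Boolean assignment to $x_i$ there is a completion of the hub variables which (i)~leaves exactly a $1/6$ fraction of the gadget's clauses unsatisfied, yet (ii)~admits a different assignment to $x_i$ together with a different hub completion that satisfies all of them. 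In other words, the gadget hard-codes a local inapproximability ratio of $5/6$ whose ``hard direction'' is selectable by the adversary after seeing the algorithm's commitment on $x_i$.

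Once such a gadget is designed, take $n$ disjoint spoke variables $x_1,\ldots,x_n$ sharing a small set of hubs, and let the adversary present all spokes first. Each of the $k$ branches of a max-of-$k$ algorithm fixes an assignment in $\{0,1\}^n$ to the spokes. For $k < e^{\delta n}$ with $\delta = \Theta(\epsilon^2)$, a Chernoff bound together with a union bound over the $k$ branches produces a target spoke assignment $A^\star \in \{0,1\}^n$ that agrees with every branch on between $(1/2 - \epsilon)n$ and $(1/2 + \epsilon)n$ spokes. The adversary then reveals the hub signs consistent with $A^\star$; this makes the instance satisfiable (by the extension of $A^\star$ that chooses the correct hub completion in every gadget) while forcing every branch to miss at least a $(1/6 - O(\epsilon))$ fraction of the clauses, giving an asymptotic ratio strictly below $5/6 + \epsilon$. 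To pass from max-of-$k$ to width-cut-$k$, we observe that the adversary's only branch-dependent move is choosing hub signs based on $A^\star$, which is determined entirely by the spoke assignments present in the at-most-$k$ live branches right before the hubs are processed.

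The main obstacle is the explicit construction of the base gadget that achieves exactly $5/6$ in input model $3$ while remaining fully satisfiable. In model~$3$ one cannot rely on informational ambiguity of a single hub; either additional hub variables per gadget (whose signs are revealed only through later-arriving variables) or a small arithmetic/XOR-style structure embedded in length-$2$ clauses is needed so that the adversary retains a meaningful choice at the end regardless of the algorithm's spoke decision. Once such a gadget is verified, the Chernoff concentration and the reduction from width-cut-$k$ to max-of-$k$ are direct analogues of the corresponding steps in the proof of Theorem~\ref{maxsat2i}.
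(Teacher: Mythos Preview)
Your high-level template---independent spokes, Chernoff plus union bound over the branches, and the reduction of width-cut-$k$ to max-of-$k$ by deferring every branch-sensitive adversarial choice to the last variables---is correct and matches the paper. The gap is the gadget, which you leave as ``the main obstacle,'' and the architecture you sketch cannot be made to work as written. Two concrete problems. First, the arithmetic is off: if a ``wrong'' gadget loses only a $1/6$ fraction of its clauses while a ``right'' one loses none, then with $A^\star$ agreeing with every branch on roughly half the spokes (which is the best you can force against $k\ge 2$ branches, since two complementary branches prevent any $A^\star$ from being wrong on more than half of either) the overall ratio comes out near $11/12$, not $5/6$; the per-gadget loss on a wrong spoke has to be $1/3$. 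Second, if a constant number of hubs is genuinely \emph{shared} among all spokes, the adversary's end-game is a choice among $O(1)$ hub completions, which cannot encode an arbitrary target $A^\star\in\{0,1\}^n$; the step ``reveal the hub signs consistent with $A^\star$'' then fails outright. You need one private adversarial bit per spoke.

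The paper's construction resolves both issues with a much simpler idea you overlooked: the statement is for max-$2$-sat, not \emph{exact} max-$2$-sat, so unit clauses are allowed. Each spoke $x_i$ gets its own private partner $y_i$ and exactly three clauses: $x_i\vee\overline{y_i}$, $\overline{x_i}\vee y_i$, and a unit clause that is either $y_i$ or $\overline{y_i}$. In model~3, when $x_i$ arrives the algorithm sees both equivalence clauses in full but does \emph{not} see the unit clause, since $x_i$ does not appear in it; that clause surfaces only when $y_i$ arrives. This gives exactly one free adversarial bit per gadget. If the branch's committed value of $x_i$ disagrees with the value the unit clause forces on $y_i$, every setting of $y_i$ satisfies at most $2$ of the $3$ clauses; otherwise all $3$ are satisfiable. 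A one-sided Chernoff bound over the $n/2$ spokes then yields the $5/6+\epsilon$ inapproximation for $k<e^{\delta n}$, and your width-cut argument finishes the proof.
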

\begin{proof}
We use an argument similar to the one in Theorem~\ref{maxsat2i}, but with a different clause construction. Given $\epsilon >0$, let $\delta = 144 \epsilon^2$ so that $k < e^{\delta n} = e^{144 \epsilon^2 n}$. The instance will contain variables $x_1, \ldots, x_{n/2}$ and $y_1, \ldots, y_{n/2}$. The clauses will be $x_i \vee \overline{y_i}$, $\overline{x_i} \vee y_i$, and either $y_i$ or $\overline{y_i}$. The first two clauses are satisfied if and only if $x_i = y_i$. The last clause determines whether $y_i$ should be assigned to 0 or 1. Any such set of clauses will be satisfiable. The adversary presents the $x_i$'s, and the width-cut algorithm produces at most $k$ assignments of these variables.

For a fixed assignment of the $\frac{n}{2}$ $x_i$'s, the probability that a uniformly random assignment agrees with the fixed one on more than $n/4+6\epsilon n$ of the variables is at most $e^{-144 \epsilon^2 n}$. Therefore, there exists an assignment $A$ that agrees with each of the assignments maintained by the algorithm on at most $ n/4+6\epsilon n$ variables. The adversary now presents the $y_i$'s. It chooses to include clause $y_i$ in the instance if $x_i$ is set to 0 in assignment $A$, and it includes $\overline{y_i}$ if $x_i$ is set to 1.. Whenever an assignment does not agree with $A$ on $x_i$, it will satisfy at most two of the three clauses where $y_i$ appears in. Therefore, no assignment can satisfy more than a $\frac{1(1/4 + 6\epsilon) + 2/3(1/2 - (1/4 + 6\epsilon)))}{1/2} = 5/6 + \epsilon$ fraction of the clauses.
\end{proof}

\subsection{Candidate for a width 2 approximation algorithm}

Now we present a max-of-2 algorithm for max-sat. We were unable to prove impossibility results saying max-of-$k$ algorithms for constant $k$ cannot achieve an approximation ratio of $3/4$ in input model 1 or 2. Thus we suggest trying to use Johnson's algorithm in some way. In a max-of-2 algorithm, when processing a variable, we want to have a preference for the case in which the two assignments set a variable differently, since otherwise the $2/3$ inapproximation bound for online algorithms could be applied to the max-of-2 algorithm. In Algorithm~\ref{maxsat2} we present a formal way to do this, where the parameter $p$ controls how much we value different assignments. The variables are $\{x_1, \ldots, x_n\}$ in online order, and the algorithm constructs two assignments $A_1$ and $A_2$. Recall that, at any point in the algorithm, the measure of a clause $C$, $\mu(C)$, is defined by the product of its weight times $2^{-l}$ where $l$ is the number of variables not yet assigned.

We call the algorithm Width-2-Johnson's Algorithm. When deciding the assignment for $x_i$, it calculates, for each of the two assignments, the measures of clauses that become satisfied when assigning $x_i$ to 0 or to 1, as in Algorithm~\ref{JohnsonAlg}. For each assignment $D \in \{0,1\}^2$, it adds the measures of satisfied clauses when assigning $x_i$ to $D$. However, instead of double-counting clauses that become satisfied in both assignments, 
the measures of repeated clauses are multiplied by $p$ instead of added twice 
for $1 \leq p \leq 2$.

\begin{algorithm}\caption{Width-2-Johnson's with parameter $p$}\label{maxsat2}
\begin{algorithmic}[1]
\State Initialize $A_1$ and $A_2$ to be empty assignments.
\For{$i=1$ to $n$}
\State The algorithm will calculate a function $f: \{0,1\}^2 \rightarrow \mathbb{R}$
\For{$D = (d_1, d_2) \in \{0,1\}^2$} define
\State $C_1:$ the set of clauses that become satisfied by assigning $x_i$ to $d_1$ in $A_1$.
\State $C_2:$ the set of clauses that become satisfied by assigning $x_i$ to $d_2$ in $A_2$.
\State The algorithm sets $f(D) = \mu(C_1 \setminus C_2) + \mu(C_2 \setminus C_1) + p\mu(C_1 \cap C_2)$.
\EndFor
\State The algorithm assigns $x_i$ to argmax$_{D \in \{0,1\}^2}\{ f(D) \}$. 
\EndFor\\
\Return the assignment that satisfies a larger weight of clauses.
\end{algorithmic}
\end{algorithm}

Width-2-Johnson's algorithm is a candidate to achieve a good approximation ratio for max-sat. The following lemma suggests that $p=1.5$ is the right choice. It's unclear whether it could achieve an approximation ratio of $3/4$. It would be interesting to show that it achieves an approximation ratio greater than $2/3$, which would show that, unlike bipartite matching, max-sat is helped by constant width.

\begin{lemma}
Width-2-Johnson's algorithm cannot achieve an approximation ratio of $3/4$ if $p \neq 1.5$.
\end{lemma}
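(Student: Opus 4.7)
The plan is to exhibit, for each $p \neq 3/2$, a specific max-sat instance on which Width-$2$-Johnson's achieves approximation ratio strictly below $3/4$. The whole proof revolves around one calculation that pinpoints $3/2$ as the natural threshold. Consider a variable $x$ whose positive-literal clauses have total measure $P$ and whose negative-literal clauses have total measure $N = 2P$. At this variable the four candidate $f$-values are $f((1,1)) = pP$, $f((0,0)) = pN = 2pP$, and $f((1,0)) = f((0,1)) = P + N = 3P$. Hence the contest is between $2pP$ (collapse both branches to $x=0$) and $3P$ (split); these are equal exactly at $p = 3/2$, and for $p > 3/2$ the algorithm strictly prefers collapsing while for $p < 3/2$ it strictly prefers splitting. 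This is the pivot behaviour the two adversarial instances will exploit.

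For $p > 3/2$, I would design an instance whose first processed variable $x_1$ has measure ratio $1{:}2$ as above, forcing the algorithm into $(0,0)$ and making both branches identical from that point onward. The remaining clauses are chosen so that (i) all subsequent variables also have skewed measure ratios that keep the two branches collapsed, so the algorithm effectively proceeds as a single Johnson's execution; (ii) this sub-instance is one of the known tight examples on which Johnson's achieves only $2/3$ of its optimum; and (iii) the globally optimal assignment requires $x_1 = 1$ plus additional clauses of positive weight that are sacrificed when $x_1 = 0$. A direct comparison between $(2/3)W$ and $(3/4)(W + \epsilon)$ for the respective remainder-weight $W$ and sacrificed-weight $\epsilon$ then yields the desired ratio strictly below $3/4$.

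For $p < 3/2$, I would dualize the construction: the same $1{:}2$ pivot now forces the algorithm into $(1,0)$ at $x_1$, so branch $A_1$ misses the negative-literal clauses and branch $A_2$ misses the positive-literal clause. I then chain further pivot variables with the same $1{:}2$ ratio, arranged so that each forces its own split, and add ``witness'' clauses whose collective satisfaction requires a sign pattern that no pair of root-to-leaf paths in a width-$2$ algorithm can simultaneously realize. A counting of satisfied weight over the two surviving branches shows that the best of the two falls short of $3/4$ of the optimum.

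The main obstacle is the $p < 3/2$ case: unlike the $p > 3/2$ case where collapsing produces a single deterministic continuation that one can analyze via Johnson's known tight examples, here the algorithm retains the freedom to split again at each subsequent variable, and the width-$2$ constraint must be turned into a genuine bottleneck. The intended technique is to load the post-pivot clauses with large ``decoy'' measures that dominate the greedy $f$-score, so the algorithm burns its branching slots chasing locally favourable but globally useless splits; verifying that no pair of branches ever simultaneously realizes the optimum pattern is where the bulk of the argument lives.
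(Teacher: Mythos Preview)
Your threshold observation --- that comparing $p\max(P,N)$ to $P+N$ at ratio $1{:}2$ isolates $p=3/2$ --- is the right starting point, and it is essentially how the paper opens as well. But from there your plan over-engineers both cases and never actually constructs the instances. For $p>3/2$ you propose to chain collapsed variables into a Johnson $2/3$-tight subinstance; that would require checking that some known tight example maintains the collapse-over-split inequality at every variable, which you have not done and which there is no reason to expect. For $p<3/2$ you openly concede that the central construction (forcing both branches to miss a fixed fraction of weight across chained pivots and ``decoy'' clauses) is still to be carried out. What you have is a direction, not a proof.

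The paper shows that none of this machinery is needed: two variables and three or four clauses suffice. For $p>3/2$, pick $w$ with $1/2<w<1$ and $p>1+w$; on the instance $x_1$ (weight $1$), $\overline{x_1}\vee\overline{x_2}$ (weight $w$), $x_2$ (weight $w$), the first step collapses both branches to $x_1=1$, and then neither branch can satisfy both weight-$w$ clauses, yielding ratio $(1+w)/(1+2w)<3/4$. For $p<3/2$, pick $w$ with $0<w<1/2$ and $p<1+w$; on the four unit clauses $x_1,\overline{x_1},x_2,\overline{x_2}$ with weights $1,w,1,w$, the algorithm is forced to split on each variable, and (up to symmetry) each branch ends with total weight $1+w$ against an optimum of $2$, ratio $(1+w)/2<3/4$. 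No chaining, no reduction to Johnson's tight example, no decoys --- a single forced decision in each case already breaks $3/4$.
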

\begin{proof}
If $p>1.5$, let $w$ be such that $1> w>1/2$ and $p > 1+w$. Consider the max-sat instance consisting of clauses $x_1$ with weight 1, $\overline{x_1} \vee \overline{x_2}$ with weight $w$, and $x_2$ with weight $w$. The algorithm will assign $x_1$ to $(1,1)$ because $p > 1+w$. But then when processing $x_2$ it cannot satisfy both clauses of weight $w$ in any of the two assignments. Thus the approximation ratio achieved is $\frac{1+w}{1+2w}$, which is less than $3/4$ because $w > 1/2$.

If $p<1.5$, let $w$ be such that $0 < w < 1/2$ and $p < 1+w$. Consider the instance: $x_1$ with weight 1, $\overline{x_1}$ with weight $w$, $x_2$ with weight 1, $\overline{x_2}$ with weight $w$. The algorithm will assign $x_1$ to either $(1,0)$ or $(0,1)$ because $1+w > p$, so suppose it assigns $x_1$ to $(1,0)$. Then similarly suppose it assigns $x_2$ to $(0,1)$. Then both assignments satisfy clauses with total weight $1+w$, but the optimum assignment satisfies clauses of weight 2, hence the approximation ratio  achieved is $\frac{1+w}{2}$, and this is less than $3/4$ because $w<1/2$.
\end{proof}

\section{Bipartite matching results}
\label{sec:bipartite}


We will first consider width inapproximation results showing that width $\frac{\log n}{\log \log n}$ online algorithms cannot asymptotically improve upon the $\frac{1}{2}$ approximation given by any maximal matching algorithm.  We will then consider bipartite matching in the priority and  ROM  models. Our priority inapproximation shows that the ROM randomization cannot be
replaced by a judicious but deterministic ordering of the online vertices. 

\subsection{Width inapproximation}

We first fix  some notation for max-of-$k$ online bipartite matching. The algorithm keeps $k$ distinct matchings $M_1, \ldots, M_k$. Whenever an online vertex $u$ arrives, it can update each of the $M_j$'s by matching to $u$ one of its neighbours that $M_j$ has not yet matched. The size of the matching obtained by the algorithm is the maximum size of the $M_j$'s. We assume that the online vertices are numbered from $1$ to $n$, and the algorithm receives them in that order. The adversary chooses the offline vertices that are the neighbours. We will refer to the time when the algorithm chooses the matchings for the $i$-th online vertex as step $i$.

In the usual online bipartite matching problem, we can assume that the algorithm is greedy. This argument clearly still applies when we keep track of multiple matchings at once: we can assume that the algorithm is greedy in each. We begin with a max-of-$n^k$ algorithm and inapproximability result that follows from the relationship between advice and max-of-$k$ algorithms:

\begin{theorem}\label{WidthUppLow}
For every $\epsilon >0$ there exists a max-of-$n^{O(1)}$ algorithm achieving an approximation ratio of $1-1/e - \epsilon$. Also, no max-of-$2^{o(n)}$ algorithm can achieve an approximation ratio better than $1-1/e+\epsilon$.
\end{theorem}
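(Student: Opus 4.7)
The plan is to reduce both claims to known results in the online with advice model via Lemma \ref{WidthAdvice}, which gives a tight equivalence between max-of-$2^{b(n)}$ deterministic algorithms and online algorithms that know $n$ and consume $b(n)$ bits of advice. Since $n^{O(1)} = 2^{O(\log n)}$ and $2^{o(n)}$ corresponds to $o(n)$ advice bits, the two parts of the theorem are the width translations of the $O(\log n)$-advice upper bound of D\"urr et al.~\cite{Durr2016a} and the $o(n)$-advice lower bound of Mikkelsen~\cite{Mikkelsen15}, respectively.

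For the upper bound, I would proceed as follows. Given $\epsilon > 0$, pick $\epsilon' > 0$ small enough that $(1-\epsilon')(1 - 1/e) \geq 1 - 1/e - \epsilon$ (for instance $\epsilon' = \epsilon/(1-1/e)$). By \cite{Durr2016a} there is an online algorithm with $c_{\epsilon'} \log n$ bits of advice achieving approximation ratio at least $(1-\epsilon')(1-1/e)$ on inputs of size $n$. Applying the forward direction of Lemma \ref{WidthAdvice}, we obtain a max-of-$2^{c_{\epsilon'} \log n} = n^{c_{\epsilon'}} = n^{O(1)}$ algorithm achieving the same ratio, which is at least $1 - 1/e - \epsilon$.

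For the lower bound, I would argue by contraposition. Suppose, toward a contradiction, that for some $\epsilon > 0$ there is a max-of-$k(n)$ deterministic algorithm with $k(n) = 2^{o(n)}$ achieving an asymptotic approximation ratio strictly greater than $1 - 1/e + \epsilon$. The reverse direction of Lemma \ref{WidthAdvice} converts this into a deterministic online algorithm that uses $\log k(n) = o(n)$ bits of advice and achieves the same approximation ratio. Since a deterministic online advice algorithm is in particular a randomized one, this contradicts Mikkelsen's theorem \cite{Mikkelsen15}, which forbids any online randomized algorithm with $o(n)$ bits of advice from asymptotically exceeding $1 - 1/e$.

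The main (minor) obstacle is a bookkeeping one: Lemma \ref{WidthAdvice} is stated for algorithms that know $n$, so one has to verify that both cited results are compatible with this assumption (they are, since both the D\"urr et al.~construction and Mikkelsen's lower bound are stated in the standard advice model where the input length is known). The only other subtlety is matching the precise form of the approximation ratio in \cite{Durr2016a}, namely $(1-\epsilon')(1-1/e)$, to the form $1 - 1/e - \epsilon$ required here; this is handled by the small rescaling of $\epsilon$ noted above. No new combinatorial argument is needed — the theorem is essentially a corollary of Lemma \ref{WidthAdvice} together with the two cited advice results.
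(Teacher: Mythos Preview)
Your proposal is correct and follows essentially the same route as the paper: both parts are derived from Lemma~\ref{WidthAdvice} together with the $O(\log n)$-advice upper bound (attributed in the paper to B\"ockenhauer et al.\ as observed in D\"urr et al.) and Mikkelsen's $o(n)$-advice lower bound, with the same bookkeeping about knowledge of $n$. The only cosmetic difference is that the paper states the advice upper bound directly as $1-1/e-\epsilon$ rather than $(1-\epsilon')(1-1/e)$, so your rescaling step is not needed there.
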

\begin{proof}
By B\"{o}ckenhauer et al \cite{Bock11}, as observed in \cite{Durr2016a}, for every $\epsilon>0$ there is a $\Theta(\log n)$ advice algorithm achieving an approximation ratio of $1-1/e-\epsilon$. The algorithm is a de-randomization of the Ranking algorithm. Part of the advice string consists of an encoding of $n$. Even without this, the advice is still $\Theta(\log n)$. By Lemma \ref{WidthAdvice} there is a max-of-$2^{O(\log n)} = n^{O(1)}$ algorithm getting the desired approximation ratio. It should be noted that the algorithm uses an information-theoretic approach and is, in fact, extremely inefficient, in addition to requiring heavy pre-processing.

For the other part of the theorem, Mikkelsen \cite{Mikkelsen15} showed that, for every $\epsilon$, an advice algorithm with approximation ratio $1-1/e+\epsilon$ requires $\Omega(n)$ advice. If there was a max-of-$2^{o(n)}$ with this ratio, then by Lemma \ref{WidthAdvice} there would be a $o(n)$ advice algorithm achieving that ratio. Note that it can be assumed without loss of generality that any algorithm with $\Omega(\log n)$ advice knows $n$.
\end{proof}

We now prove some impossibility results concerning algorithms trying to beat the $1/2$ barrier that deterministic online algorithms cannot surmount. The adversarial graphs will be bipartite graphs with perfect matchings. The adversary will not only provide the graph but also construct a perfect matching ``online''. Once an offline vertex has been used in the adversary's perfect match, the adversary will not present it as a neighbour of any of the remaining online vertices. When an online vertex $u$ arrives, the adversary will choose a nonempty subset of offline vertices as its set of neighbours. Then the algorithm (which we assume without loss of generality to be greedy) chooses a match in each of the $k$ matchings. For each matching $M_i$, if there are neighbours of $u$ that have not been used in $M_i$, the algorithm must pick a neighbour $v$ and match $u$ to $v$ in $M_i$. When the algorithm has finished making its choices, the adversary picks one of the neighbours of $u$ and adds the pair of vertices to the perfect matching that it is constructing. The match to online vertex $i$ in this perfect matching is labelled as offline vertex $i$. We say that this offline vertex becomes \emph{unavailable}. An offline vertex is \emph{available} if it is not unavailable. The goal of the adversary is to force the algorithm to make as few matches as possible in the best of its $k$ matchings.

At a specific point in time and for any offline vertex $v$, we say $t(v)$ is the number of the algorithm's matchings that have used $v$. Whenever we say that the adversary \emph{gets rid} of an offline vertex $v$ at a given step we mean that, at this step, the only neighbour of the online vertex $u$ is $v$, so the best option of the algorithm is to match $u$ to $v$ in any of the matchings where $v$ has not yet been used. Also, $v$ will not be a neighbour of any of the remaining online vertices (the adversary must add $(u,v)$ to its perfect matching). If the adversary only gets rid of a constant number of offline vertices, the matchings made by the algorithm during these steps are negligible: they do not affect the asymptotic approximation ratio.

\begin{lemma}\label{Maxof2bm}
Any max-of-2 online bipartite matching algorithm cannot achieve a matching of size greater than $n/2 + 3$ on every input.
\end{lemma}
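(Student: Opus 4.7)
My plan is to exhibit an adaptive adversary that, against any deterministic max-of-2 algorithm, constructs a bipartite graph with a perfect matching of size $n$ on which $\max(|M_1|,|M_2|)\le n/2+3$. Throughout the construction I track three ``store'' sets of offline vertices: $S_{12}=(A_1\cap A_2)\setminus P$, $S_1=(A_1\setminus A_2)\setminus P$, and $S_2=(A_2\setminus A_1)\setminus P$, where $A_j$ is the current offline match set of $M_j$ and $P$ is the growing perfect matching. An online vertex whose neighbors all lie in $S_{12}$ cannot be matched by either matching, while neighbors confined to $S_j$ trap only $M_j$.

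The adversary uses three move-types. A \emph{Build} step presents $u$ with four fresh offline neighbors; after the algorithm commits, the adversary places one unused fresh neighbor into $P$. In Case A, where the algorithm's two matchings select the same neighbor, this contributes one new vertex to $S_{12}$; in Case B, where they select different neighbors, it contributes one vertex to $S_1$ and one to $S_2$. A \emph{Conversion} step presents $u$ with two neighbors drawn from $S_j$; one of $M_1,M_2$ has no free neighbor and skips $u$ while the other matches one of them, and the adversary places the unpicked neighbor into $P$, converting two $S_j$-units into one $S_{12}$-unit at the cost of a $+1$ growth in the non-trapped matching. A \emph{Trap} step presents $u$ with a single $S_{12}$-neighbor; neither matching grows, and $|S_{12}|$ decreases by $1$. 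The adversary's adaptive policy is: schedule a Trap whenever $|S_{12}|>0$; else schedule a Conversion when some $|S_j|\ge 2$; else schedule a Build.

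A direct count over $b$ Case B Builds, $c_1$ and $c_2$ Conversions of each type, and $t$ Traps gives $|M_1|=b+c_2$ and $|M_2|=b+c_1$; store-nonnegativity forces $c_j\le b/2$ and $t\le c_1+c_2$, while $b+c_1+c_2+t=n$. The balanced choice $b=t=n/3$, $c_1=c_2=n/6$ attains $|M_1|=|M_2|=n/2$ exactly. Case A Builds are self-contained: each pairs with a following Trap to form a $2$-step mini-block with max growth $1$, matching the same $1/2$ rate. I would then establish the invariant $\max(|M_1|,|M_2|)\le\lceil i/2\rceil+3$ by induction on the step $i$, case-analyzing the algorithm's Case A/Case B response to each Build and verifying that the adaptive policy preserves it.

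The main obstacle will be controlling the slack when the algorithm freely interleaves Case A and Case B choices, so that the $1{:}1{:}1$ ratio of move-types is momentarily disturbed. The key point is a tight amortized argument: after a Case A Build the adversary's immediate Trap restores the pre-Build store configuration, so the Case A move plus its follow-up Trap costs exactly $2$ steps for max growth $1$, matching the $1/2$ rate of the pure Case B six-step block. Verifying this rigorously across arbitrary interleavings, and obtaining the sharp slack constant $+3$, requires a careful enumeration of admissible Build-response pairs.
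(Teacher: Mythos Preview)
Your approach differs structurally from the paper's. The paper uses a two-stage adversary: in Stage~1 it presents \emph{all} currently available offline vertices as neighbors (not just fresh ones) and, after the algorithm commits, places some vertex with $t(v)=0$ into its growing perfect matching; this continues until fewer than three such vertices remain. Because the algorithm is free to match to vertices already used by the other branch, the paper does a single global count at the end of Stage~1: with $p$ steps and $p_1,p_2$ the numbers of available vertices with $t(v)=1,2$, one gets $2p=p_1+2p_2$ and $n\ge p+p_1+p_2$, hence $n/2\ge 3p_1/4+p_2$. Stage~2 then runs $\lceil p_1/4\rceil$ ``$M_1$-steps'' and $\lceil p_1/4\rceil$ ``$M_2$-steps'' (essentially your Conversions, but batched rather than interleaved with Traps), and the bound $p+2+\lceil p_1/4\rceil\le 3p_1/4+p_2+3\le n/2+3$ drops out in two lines with no step-by-step invariant.

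Your interleaved Build/Convert/Trap scheme is a reasonable alternative and should yield $n/2+O(1)$, but two points are real gaps as written. First, your policy silently assumes a Build is always available when no Trap or Conversion is, i.e., that at least four fresh offline vertices remain; this fails near the end, and you have not specified what the adversary does in those last rounds or bounded their contribution. Second, you only assert the invariant $\max(|M_1|,|M_2|)\le\lceil i/2\rceil+3$. Your block-level amortization is fine (a Case~A Build plus its Trap is a tight $2$-step block; the Case~B six-step cycle is tight), but the transient slack within a Case~B block reaches $3/2$ after the first Conversion, and combined with even a short uncontrolled endgame this threatens the sharp $+3$. The paper's staged structure sidesteps both issues: Stage~1 runs until fresh vertices are essentially exhausted, the at-most-two leftovers are explicitly ``gotten rid of'' (contributing the $+2$), and the Stage~2 ceilings contribute the remaining $+1$.
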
 
\begin{proof}
Since the algorithm has only two matchings, at any point in time and for any offline vertex $v$, $0 \leq t(v) \leq 2$. There will be two stages. The first stage consists of steps where, at the beginning of the step, there are more than two available vertices $v$ with $t(v)=0$. The adversary chooses as neighbours all available vertices. Thus, we can guarantee that, after the algorithm has chosen matches for the online vertex of this step, there will still be at least one available vertex $v$ with $t(v)=0$. The adversary will choose one such offline vertex (i.e. one not chosen in any matching) to be added to its perfect matching. If at the beginning of a step, there are less than 3 available offline vertices with $t(v)=0$ (so that we cannot guarantee that there will exist a vertex with $t(v)=0$ after the algorithm does its matches), the adversary concludes stage 1 and gets rid of the at most 2 available offline vertices with $t(v)=0$ before stage 2.

Let $p$ be the number of steps that occurred during stage 1. Let $p_1$ and $p_2$ be the number of offline vertices with $t(v) = 1$ and $t(v)=2$ at the end of stage 1, respectively. During stage 1, at each step, $\sum_v t(v)$ is incremented by 2, so $2p = p_1 + 2p_2$. At each step in stage 1 a vertex becomes unavailable, but no vertex with $t(v) \neq 0$ becomes unavailable, so $n \geq p + p_1 + p_2$ (inequality because the adversary may get rid of vertices). Therefore, $p = \frac{p_1}{2} + p_2$ and $\frac{n}{2} \geq  \frac{3}{4} p_1 + p_2$. Right before the beginning of stage 2, the size of each matching is at most $p+2$: $p$ from stage 1 and 2 from getting rid of vertices.

The second stage consists of steps where at the beginning of the step all the available offline vertices satisfy $t(v) \geq 1$, and there are some available vertices with $t(v)=1$. At this stage, the $p_1$ vertices with $t(v)=1$ are considered. Since the number of vertices matched in $M_1$ and $M_2$ have to be the same during stage one, half of the vertices are used in $M_1$ and half in $M_2$. The steps in this stage will be either $M_1$-steps or $M_2$-steps (the order in which the adversary does them is irrelevant). An $M_1$ step is a step where the neighbours of the online vertex are available vertices that have already been matched in $M_1$, so the algorithm can only do a match in $M_2$. After the algorithm does its match, if there is still an available vertex that has only been matched in $M_1$, the adversary adds it to the perfect matching it is constructing, making it unavailable. At the beginning of stage 2, there are $p_1/2$ vertices with $t(v)=1$ matched in $M_1$. Every $M_1$-step (except possibly the last, where there may be only one available vertex matched in $M_1$ with $t(v)=1$) makes two vertices unusable: one because it is made unavailable and one because $t(v)$ changes from 1 to 2. So there will be $\lceil p_1/4 \rceil$ $M_1$-steps. We define $M_2$-steps analogously, and there will be $\lceil p_1/4 \rceil$ $M_2$-steps. After stage 2, all available vertices have $t(v)=2$ and are unusable, so no more matchings are made by the algorithm. The adversary can finish the construction of the perfect matching by making the remaining online vertices be neighbours of all offline vertices with $t(v)=2$.

The size of the matchings produced by the algorithm is at most $p + 2+ \lceil \frac{p_1}{4} \rceil$, since the algorithm can only increase the size of $M_1$ during stage 1, while the adversary gets rid of vertices, and during $M_2$-steps, and similarly for $M_2$. But $p + 2 + \lceil \frac{p_1}{4} \rceil \leq p+3+ \frac{p_1}{4} =  \frac{p_1}{2} + p_2 + 3 +\frac{p_1}{4} \leq \frac{n}{2} + 3$. This in particular implies that the asymptotic approximation ratio achieved by any algorithm is at most $\frac{1}{2}$.
\end{proof}

Clearly, making the width bigger without making $n$ bigger will eventually allow the algorithm to obtain an optimal matching using brute-force. However, it is natural to wonder whether by allowing $n$ to be large the adversary will be able to trick the algorithm into producing a small matching. This question is answered by the following:

\begin{theorem}\label{WidthBip1/2}
For any constant $k$, any width-$k$ online bipartite matching algorithm cannot achieve an asymptotic approximation ratio greater than $\frac{1}{2}$.
\end{theorem}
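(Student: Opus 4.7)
The plan is to generalise the two-stage adversary of Lemma~\ref{Maxof2bm} from max-of-$2$ to width-$k$ for arbitrary constant $k$, aiming to show that on an instance admitting a perfect matching of size $n$, every width-$k$ online algorithm outputs a matching of size at most $n/2 + O_k(1)$. A useful preliminary observation is that the Maxof2bm adversary's moves depend only on the current multiset of at most $k$ active partial matchings, and not on the branching history used to produce it. Hence the same style of attack applies uniformly to width-$k$, even though the number of active matchings can grow adaptively from $1$ to $k$; in particular, if the algorithm chooses never to branch up to full width $k$, the argument still goes through for the smaller width and only makes the adversary's task easier.

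In Stage~1 the adversary presents each online vertex $u$ with every currently available offline vertex as a neighbour. The algorithm extends (and possibly branches) its $r \le k$ current matchings to at most $k$ new matchings, each assigning $u$ to some vertex in the chosen set $V_{\mathrm{new}} \subseteq V_{\mathrm{avail}}$ with $|V_{\mathrm{new}}| \le k$. Whenever $|V_{\mathrm{avail}}| > k$ the adversary picks a true match $v^*(u) \in V_{\mathrm{avail}} \setminus V_{\mathrm{new}}$, which then becomes unavailable with $t(v^*(u)) = 0$ at the moment of deletion. Letting $p$ denote the number of Stage-1 steps and $p_i$ the number of remaining available offline vertices with $t(v)=i$, tracking $t(v)$ exactly as in Lemma~\ref{Maxof2bm} yields the invariants
\[
\sum_{i=1}^{k} i\, p_i \;\le\; kp,
\qquad
p + \sum_{i=1}^{k} p_i \;\ge\; n - O_k(1),
\qquad
|M_j| \le p
\]
for each active matching $M_j$ at the end of Stage~1.

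Stage~2 then drains the partially saturated pool. For each non-empty proper subset $S$ of the currently active matchings, the adversary can present an online vertex whose neighbour set is exactly the available offline vertices matched in every $M_j$ with $j \in S$ but in none of the matchings outside $S$: only the matchings outside $S$ are able to extend $u$. Each such round consumes roughly two vertices from the corresponding $p_{|S|}$-pool (one matched by the algorithm, one used as the true match), and by scheduling the rounds so that each matching is ``blocked'' equally often, the adversary bounds the Stage-2 growth of every $M_j$ by $\sum_{i=1}^{k-1} c_{k,i}\, p_i$ for constants $c_{k,i}$ depending only on $k$; Maxof2bm is exactly the case $c_{2,1} = 1/4$. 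Combining with the Stage-1 invariants reduces the bound on $\max_j |M_j|$ to a small linear program in $(p, p_0, \ldots, p_k)$ whose optimum should evaluate to $n/2 + O_k(1)$.

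The main obstacle is the Stage-2 analysis for general $k$. One has to verify (a) that the required subset-based rounds are always realisable, i.e.\ that the pool of ``vertices matched in exactly $S$'' is non-empty when invoked, possibly by re-symmetrising between rounds; and (b) that the resulting linear program yields exactly the tight constant $1/2$, rather than some weaker ratio that would leave a gap between greedy and any width-$k$ algorithm. For $k=2$ this is the short calculation underlying Lemma~\ref{Maxof2bm}, but for general $k$ the cleanest route is likely inductive: treat the $t(v) = k-1$ pool as a smaller width-$(k-1)$ instance and recursively invoke the hypothesis on it. Closing the induction without letting lower-order error terms degrade the constant $1/2$ is the delicate combinatorial step.
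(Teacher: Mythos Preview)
Your Stage~1 is correct and essentially identical to the paper's. The gap is in Stage~2, and your own closing paragraph already senses it: the ``scheduling so that each matching is blocked equally often'' idea is not a proof, and the induction you suggest at the end --- ``treat the $t(v)=k-1$ pool as a smaller width-$(k-1)$ instance'' --- is on the wrong parameter. Vertices with $t(v)=k-1$ are unused by \emph{different} single matchings depending on which $S$ they belong to, so they do not form a single width-$(k-1)$ subinstance; grouping by the value of $t(v)$ (your $p_i$'s) rather than by the actual subset $S$ is exactly what makes the linear program you set up intractable.

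The paper's Stage~2 is cleaner and avoids any LP. Having already partitioned the surviving available vertices into the sets $V_S$ (which you do correctly), the adversary does not try to schedule individual rounds at all: instead, for each nonempty proper $S\subseteq\{1,\dots,k\}$ it \emph{recursively runs the full max-of-$|S^C|$ adversary} on the offline set $V_S$, ignoring the matchings indexed by $S$ (which cannot move inside $V_S$ anyway). By the inductive hypothesis this limits the growth of any $M_i$ with $i\in S^C$ inside $V_S$ to $|V_S|/2 + c_{|S^C|}$. Summing over all $S$ not containing $i$ gives
\[
|M_i| \;\le\; p \;+\; O_k(1) \;+\; \sum_{S:\, i\notin S}\Bigl(\tfrac{|V_S|}{2}+c_{|S^C|}\Bigr)
\;\le\; p + \frac{q-p}{2} + O_k(1) \;=\; \frac{n}{2}+O_k(1),
\]
since $\sum_{S:\,i\notin S}|V_S|$ is exactly the number of available vertices not yet used by $M_i$, which is $q-p$, and $p+q=n+O_k(1)$. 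The recursion is on $|S^C|\le k-1$, not on the value of $t(v)$; this is the missing idea. The extension from max-of-$k$ to width-$k$ is then a separate bookkeeping argument showing that the adversary for max-of-$m$ can be upgraded to max-of-$m'$ whenever the algorithm branches.
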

\begin{proof}
We will prove the theorem by first considering max-of-$k$ and then extending the result to width. We prove the following statement by induction on $k$: for any max-of-$k$ algorithm there exists a constant $c_k$ (that only depends on $k$) such that for every $n$ there is a graph $G$ of size $n$ ($n$ vertices on each of the two sides) where the algorithm obtains matchings of size at most $n/2 + c_k$. For $k=1$, the problem is the well studied online bipartite matching problem: there are adversarial graphs where we can take $c_1=1$ (needed for odd values of $n$). Lemma~\ref{Maxof2bm} proves the case for $k=2$, taking $c_2=3$. Suppose that the claim is true for max-of-$i$ for all $1 \leq i \leq k$, and we shall prove it for max-of-$(k+1)$. As before, the adversary will decide the neighbours of the incoming online vertex as well as the offline vertex that matches it in the perfect matching it constructs (and this offline vertex will not be a neighbour of any of the remaining online vertices). Let $M_1, \ldots, M_{k+1}$ be the matchings that the algorithm constructs.

The adversary will have the same first stage as in the lemma, at each step adding to its perfect matching a vertex with $t(v)=0$. When there are less than $k+2$ available vertices with $t(v)=0$ (so we cannot guarantee that there will be an available vertex with $t(v)=0$ after the algorithm does its matchings), stage 1 ends and then the adversary gets rid of all the available vertices with $t(v)=0$. Let $p$ be the number of steps in the first stage and let $q$ be the number of available offline vertices right after stage 1, ie the number of offline vertices not in the adversary's perfect matching at that time. Then $p + q = n$, and after the adversary gets rid of vertices there are at most $q$ available vertices. Now the adversary proceeds to a second stage.

For every $S \subseteq \{1, \ldots, k+1\}$ with $S \neq \emptyset$ and $S \neq \{1, \ldots, k+1\}$, let $V_S$ be the subset of available offline vertices that have been used in $M_i$ for all $i \in S$ and that have not been used in $M_i$ for all $i \notin S$. Let $S^C = \{1, \ldots, k+1\} \setminus S$, and notice that $0 < |S^C| < k+1$. At this point, only $M_i$'s with $i \in S^C$ can match vertices in $V_S$. The idea is that we will recursively apply our adversary for max-of-$|S^C|$ algorithms on a graph with $V_S$ as the set of offline vertices and with $|V_S|$ online vertices. The $M_i$'s with $i \in S$ are ignored: the algorithm cannot add matches in these when the set of neighbours of the online vertex is a subset of $V_S$. By the induction hypothesis, there is an adversarial strategy for max-of-$|S^C|$ such that the size of any of the matchings obtained (on a graph that uses $|V_S|$ online vertices and $V_S$ as the offline vertices) is at most $|V_S|/2 + c_{|S^C|}$. This will be close to 1/2 of the total when $|V_S|$ is large, since $c_{|S^C|}$ is a constant.

In the second stage, the adversary executes the max-of-$|S^C|$ strategies described above. For $S_1 \neq S_2$, the strategies will be independent because the set of offline neighbours is disjoint. Thus, the order in which the strategies are executed is irrelevant: they could even be executed in parallel. For concreteness, suppose the adversary first executes the strategies for subsets $S$ of size $k$ in lexicographic order (here it applies max-of-1 strategies),  then for subsets $S$ of size $k-1$ in lexicographic order (here it applies max-of-2 strategies), etc. After the strategies for all subsets have been executed, stage 2 is concluded and now we need to show that the adversary's perfect matching is about twice the size as any of the matchings constructed by the algorithm.

For simplicity, ignore the adversary getting rid of vertices and suppose that every $V_S$ is large enough. In the end, any fixed matching $M_i$ will use $p$ offline vertices because of stage 1. After applying our recursive adversaries, $M_i$ will use roughly half of the offline vertices that were not used by $M_i$ by the end of stage 1 but were still available at this time. The number of offline vertices that are available by the end of stage 1 is $q$. Thus, in the end, the size of matching $M_i$ is $p + \frac{q - p}{2} = \frac{p + q}{2} = n/2$.

Now we make the intuition from the previous paragraph precise. Notice that the size of matching $M_i$ is at most $p+k+1 +\sum_{S: i \notin S} \frac{V_S}{2}+c_{|S^C|}$: $p$ during stage 1, $k+1$ from vertices the adversary gets rid of before stage 2, and the rest during stage 2. The number of available offline vertices that $M_i$ has not used at the beginning of stage 2 is $\sum_{S: i \notin S} V_S \leq q - p = n - 2p$. By the induction hypothesis, $\sum_{S : i \notin S} c_{|S^C|} \leq \sum_{1 \leq j \leq k} \binom{k}{j-1} c_{j}$, since there are $\binom{k}{j-1}$ ways of choosing $S^C$ of size $j$ if we require $i \in S^C$. Therefore, the size of any matching obtained by the algorithm is at most $\frac{n}{2} + c_{k+1}$ where $c_{k+1} = k+1+\sum_{1 \leq j \leq k} \binom{k}{j-1} c_{j}$. This concludes the induction and the proof for max-of-$k$ algorithms.

Now we extend the result to width $k$. The idea is that we will slightly modify the adversary so that, given the decisions of the algorithm, for $a>b$, a max-of-$a$ adversary can be viewed as a max-of-$b$ adversary. The width-$k$ adversary will use this fact to change from the max-of-$b$ adversary to the max-of-$a$ adversary, whenever the algorithm branches, without affecting the argument. Let $\mathbb{A}_j$ be the max-of-$j$ adversary, but where the condition to end stage 1 is that there are less than $k+1$ available vertices with $t(v)=0$, instead of $j+1$. Also, we assume that $\mathbb{A}_j$ may perform the independent stage 2 simulations in any order we choose. The width-$k$ adversary $\mathbb{A}$ does the following: begin by assuming $m$, the maximum number of matchings maintained by the algorithm, is 1. When $\mathbb{A}$ needs to tell the algorithm which are the neighbours of the next online vertex, $\mathbb{A}$ does whatever $\mathbb{A}_m$ would do given the matchings the algorithm has made so far. If the algorithm does not branch, $\mathbb{A}$ constructs the perfect match as $\mathbb{A}_m$ would, and this finishes the processing of the online vertex. On the other hand, the algorithm may branch on the decisions of the online vertex, so that now it maintains $m+r$ matchings. Each new matching $M_{new}$ will branch off of some matching $M_{old}$, which in the branching tree means that now $M_{new}$ is a leaf of the subtree rooted at $M_{old}$ (or any of its ancestors). In this case, $\mathbb{A}$ simply increases $m$ by $r$. Then it simulates $\mathbb{A}_m$ ($m$ is the increased value) to obtain the perfect match. And this finishes the processing of the online vertex.

At any point in time, there is a max-of-$m$ algorithm that simulates the width-$k$ algorithm up to this point, if it knows the branching tree created up to this point. For each level of the branching tree (each corresponding to an online vertex), the max-of-$m$ algorithm keeps $l$ copies of each node, where $l$ is the number of leaves in the subtree rooted at that node. We claim that all previous decisions made by $\mathbb{A}$ are consistent with $\mathbb{A}_m$, in the following sense: the behaviour of $\mathbb{A}$ on the width-$k$ algorithm (which so far only has $m$ branches) is equivalent to the behaviour of $\mathbb{A}_m$ on the max-of-$m$ algorithm just described.  By behaviour of an adversary, we mean the offine neighbours it presents and the perfect matchings it constructs at each step.

We can show this by induction. Consider a step where the width-$k$ algorithm branches, and let $m_i$ and $m_f$ be the values of $m$ at the beginning and at the end of the step, respectively. Suppose that $\mathbb{A}$ (on the width-$k$ algorithm) behaves as $\mathbb{A}_{m_i}$ on the max-of-$m_i$ algorithm that simulates the width-$k$ algorithm up until the previous step. We will now show that $\mathbb{A}$ behaves as $\mathbb{A}_{m_f}$ on the max-of-$m_f$ algorithm that simulates the width-$k$ algorithm up until this step. On later steps, as long as the algorithm does not branch, this consistency will still hold. We will see $\mathbb{A}$ as $\mathbb{A}_{m_i}$ (on the max-of-$m_i$ algorithm) up until the end of the previous step, which is valid by our assumption. If the branching occurs during stage 1, then what we claim is true since $\mathbb{A}_{m_i}$ and $\mathbb{A}_{m_f}$ have the same stage 1.

Now suppose that the branching occurs during stage 2. We will prove that up until the beginning of the current step we can make the behaviour of $\mathbb{A}_{m_f}$ on the max-of-$m_f$ algorithm be the same as the behaviour of $\mathbb{A}_{m_i}$ on the max-of-$m_i$ algorithm. On a stage 2 step, $\mathbb{A}_{m_i}$ will be simulating $\mathbb{A}_x$ on $V_S$ for some $S \subseteq \{1, \ldots, m_i\}$ and where $x = |S^C|$. Then in $\mathbb{A}_{m_f}$ we choose to simulate a step of $\mathbb{A}_{y}$ on $V_{S'}$, where $S \subseteq S' \subseteq \{1,\ldots,m_f\}$, $S'$ contains all indices of matchings that branched off from matchings indexed in $S$, and $y = |{S'}^C| \geq |S^C|$ (${S'}^C$ contains indices of matchings that branched off from matchings indexed in $S^C$). This is because by definition the max-of-$m_f$ algorithm only keeps copies of the matchings that will later branch off, so $V_S$ = $V_{S'}$: here the left hand side corresponds to the set according to $\mathbb{A}_{m_i}$ and the right hand side is according to $\mathbb{A}_{m_f}$. More generally, there is a mapping $f: \mathcal{P} (\{1, \ldots, m_i\}) \rightarrow \mathcal{P}(\{1, \ldots, m_f\})$ that maps a set of indices $Q$ to the set of indices of matchings that branch off from matchings indexed in $Q$. Because of the behaviour of the algorithms, it holds that $V_Q$ according to $\mathbb{A}_{m_i}$ is equal to $V_{f(Q)}$ according to $\mathbb{A}_{m_f}$, and for any $R \subseteq\{1,\ldots,m_f\}$ that does not have a preimage under $f$, $V_R = \emptyset$. In the step where $m$ changes from $m_i$ to $m_f$, $\mathbb{A}_{m_i}$ simulating $\mathbb{A}_x$ to select a subset of $V_S$ as the set of neighbours is equivalent to $\mathbb{A}_{m_f}$ simulating $\mathbb{A}_y$ to select a subset of $V_{S'}$. After the algorithm does its decision and branching, $m$ is updated and $\mathbb{A}$ actually simulates $\mathbb{A}_{m_f}$, so the behaviour is the same. This concludes the proof of our claim.

Thus, in the end, the behaviour of $\mathbb{A}$ on the width-$k$ algorithm is equivalent to the behaviour of $\mathbb{A}_k$ on the max-of-$k$ algorithm. This means that the size of the matching constructed by the width-$k$ algorithm is at most $n/2 + c_k$. Since we changed stage 1 of the adversaries, the $c_k$'s will be slightly larger, but they still only depend on $k$.
\end{proof}

\begin{corollary}
Let $t(n) = \frac{\log n}{\log \log n}$. Any max-of-$t(n)$ online bipartite matching algorithm cannot achieve an asymptotic approximation greater than $\frac{1}{2}$.
\end{corollary}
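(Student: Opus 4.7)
The plan is to apply Theorem~\ref{WidthBip1/2} with $k=t(n)=\log n/\log\log n$ and show that the constants $c_k$ produced by its recursion satisfy $c_{t(n)}=o(n)$. First I would observe that the inductive construction in the proof of Theorem~\ref{WidthBip1/2} does not actually require $k$ to be a fixed constant: at every level of the recursion we only need the current pool of online vertices (and the sizes $|V_S|$) to be sufficiently large relative to $k$, and for $k=t(n)$ this is automatic for all large enough $n$. So for each such $n$ we obtain an adversarial graph on $n$ online vertices on which every max-of-$t(n)$ matching has size at most $n/2+c_{t(n)}$, and the task reduces to bounding $c_k$ when $k=t(n)$.

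Next I would analyze the recurrence
\[
c_{k+1} \;=\; k+1 \;+\; \sum_{j=1}^{k}\binom{k}{j-1}\,c_j
\]
more tightly than the crude estimate $c_{k+1}\le k+1+2^{k}c_k$, which only yields $c_k \le 2^{O(k^2)}$ and is far too large. Writing $c_k = k!\,B(k)$ and using $\binom{k}{j-1} j! = k!\,j/(k-j+1)!$ together with $\sum_{i\ge 1} 1/i! \le e-1$, a direct manipulation gives
\[
B(k+1) \;\le\; (e-1)\,B(k) \;+\; \frac{1}{k!},
\]
from which $B(k) = O\bigl((e-1)^{k}\bigr)$ and therefore $c_k = O\bigl((e-1)^{k}\,k!\bigr)$.

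Finally I would plug in $k=t(n)=\log n/\log\log n$. Using $k!\le k^k=2^{k\log k}$ with
\[
k\log k \;=\; \log n\,\Bigl(1-\tfrac{\log\log\log n}{\log\log n}\Bigr)
\qquad\text{and}\qquad
k=o(\log n),
\]
we obtain $(e-1)^{k}\,k! \;\le\; n\cdot 2^{-\omega(1)} \;=\; o(n)$. Hence $c_{t(n)}/n\to 0$, so the adversarial construction forces the asymptotic approximation ratio to be at most $\liminf_{n}\,(n/2+c_{t(n)})/n = 1/2$, which is the corollary.

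The main obstacle is the middle step: the obvious bound $c_{k+1}\le k+1+2^{k}c_k$ blows up as $2^{\Omega(k^2)}$, which already exceeds $n$ well before $k$ reaches $\log n/\log\log n$. To get through, one must exploit the binomial convolution carefully (via the substitution $c_k=k!\,B(k)$, or equivalently an exponential generating function argument) to cut the growth down to $O(\mathrm{const}^{k}\,k!)$, which is just barely small enough at $k=\log n/\log\log n$ to give $c_k = o(n)$.
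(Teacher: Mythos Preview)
Your approach is correct in outline, but it is more laborious than the paper's, and it contains one small unjustified step.

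\textbf{Comparison with the paper.} The paper avoids any exponential-generating-function substitution: it proves by direct induction that $c_k\le k^k$. Indeed, assuming $c_j\le j^j$ for $j\le k$, one has
\[
c_{k+1}\;\le\;(k+1)+\sum_{j=1}^{k}\binom{k}{j-1}j^{\,j}
\;<\;\sum_{j=0}^{k+1}\binom{k+1}{j}k^{\,j}
\;=\;(k+1)^{k+1},
\]
using $\binom{k}{j-1}j^{\,j}\le\binom{k+1}{j}k^{\,j}$ for $1\le j\le k$ and $k+1\le 1+k^{\,k+1}$ for the boundary terms. Then for $t=t(n)=\log n/\log\log n$ one gets $t^{\,t}=(\log n)^{t}/(\log\log n)^{t}=n/(\log\log n)^{t}=o(n)$. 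So the ``main obstacle'' you identify is not really there: one does not need the $c_k=k!\,B(k)$ trick, and the crude bound $c_{k+1}\le (k+1)+2^{k}c_k$ is simply the wrong first attempt. Your bound $c_k=O\bigl((e-1)^{k}k!\bigr)$ is asymptotically a bit sharper than $k^k$ (by Stirling), but that extra precision is not needed for the corollary.

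\textbf{A small gap.} Your derivation of $B(k+1)\le (e-1)B(k)+1/k!$ silently assumes $B(j)\le B(k)$ for all $j\le k$, i.e.\ that $B$ is nondecreasing, which you do not verify. The cleanest fix is to set $M_k=\max_{j\le k}B(j)$ and observe that the same manipulation gives $B(k+1)\le (e-1)M_k+1/k!$, hence $M_{k+1}\le (e-1)M_k+1/k!$ and $B(k)\le M_k=O\bigl((e-1)^{k}\bigr)$. With that patch your argument goes through, and the final plug-in $k=t(n)$ is fine.
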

\begin{proof}
For any $k$, from the proof of Theorem \ref{WidthBip1/2} max-of-$k$ algorithms can achieve matchings of size at most $n/2 + c_k$ on some hard graphs. First, we note that $c_k \leq k^k$. This is true for $k=1$. Assuming this holds for $1 \leq i < k+1$, then $c_{k+1} \leq k+1 + \sum_{1 \leq j \leq k} \binom{k}{j-1} j^j < \sum_{0 \leq j \leq k+1} \binom{k+1}{j} k^j = (k+1)^{k+1}$. In the second inequality we use the fact that $k+1 \leq 1+k^{k+1}$.

Now, notice that $t^t < (\log n)^{\frac{\log n}{\log \log n}} = (2^{\log \log n})^{\frac{\log n}{\log \log n}} = n$, where in the first inequality we omit dividing by $(\log \log n)^{\frac{\log n}{\log \log n}}$. This means that $t^t = o(n)$. Thus, max-of-$t$ algorithms achieve matchings of size at most $n/2 + t^t = n/2 + o(n)$.
\end{proof}

The following corrolary follow immediately from the observations in 
Section \ref{max-of-k-becomes}.
D{\"{u}}rr et al \cite{Durr2016a} proved an $\Omega(\log \log \log n)$ advice lower bound for achieving an approximation ratio greater than $1/2$, and this only applied to a restricted class of online advice algorithms. We improve this result:

\begin{corollary}
$\Omega(\log \log n)$ advice is required for an online algorithm to achieve an asymptotic approximation ratio greater than $1/2$ for bipartite matching, even when the algorithm is given $n$ in advance.
\end{corollary}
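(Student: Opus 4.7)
The plan is to derive the corollary immediately by combining Lemma \ref{WidthAdvice} with the preceding max-of-$t(n)$ inapproximation corollary, for $t(n) = \frac{\log n}{\log \log n}$. The argument will be by contrapositive: if few advice bits sufficed to beat $1/2$, then correspondingly few parallel matchings would suffice, contradicting the width lower bound we have already established.

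Concretely, I would suppose for contradiction that there is an online algorithm $\mathbb{A}$ which knows $n$, uses $b(n)$ advice bits, and achieves an asymptotic approximation ratio strictly greater than $1/2$, with $b(n) \leq \log \log n - \log \log \log n$. By Lemma \ref{WidthAdvice}, $\mathbb{A}$ can be converted into a max-of-$2^{b(n)}$ online algorithm achieving the same asymptotic approximation ratio. Since
\[
2^{b(n)} \;\leq\; 2^{\log \log n - \log \log \log n} \;=\; \frac{\log n}{\log \log n} \;=\; t(n),
\]
this algorithm is in particular a max-of-$t(n)$ algorithm (just never use the unavailable extra branches). That contradicts the previous corollary, which states that no max-of-$t(n)$ algorithm can asymptotically beat $1/2$. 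Hence any online algorithm improving on the $1/2$ ratio must use more than $\log \log n - \log \log \log n = \Omega(\log \log n)$ bits of advice, as claimed.

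There is essentially no obstacle: the whole content has already been packed into Lemma \ref{WidthAdvice} (the advice $\leftrightarrow$ max-of-$k$ equivalence, which requires knowledge of $n$, exactly the hypothesis of the corollary) and into the max-of-$t(n)$ inapproximation corollary. The only check is the arithmetic identity $2^{\log \log n - \log \log \log n} = \frac{\log n}{\log \log n}$, which is immediate. Note also that the stronger precise bound $b(n) > \log \log n - \log \log \log n$ matches the bullet in the introduction, and it subsumes the $\Omega(\log \log n)$ statement since $\log \log \log n = o(\log \log n)$.
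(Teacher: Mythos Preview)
Your proposal is correct and follows essentially the same approach as the paper: contrapose, invoke Lemma~\ref{WidthAdvice} to turn a $b(n)$-bit advice algorithm into a max-of-$2^{b(n)}$ algorithm, and then use the arithmetic $2^{\log\log n - \log\log\log n} = \frac{\log n}{\log\log n}$ to contradict the preceding max-of-$t(n)$ inapproximation corollary. The paper's proof is the same argument stated more tersely.
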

\begin{proof}
No $\log \left(\frac{\log n}{\log \log n}\right) = 
\log \log n - \log \log \log n$ advice algorithm can achieve an asymptotic approximation ratio better than $1/2$, even knowing $n$. Otherwise Lemma~\ref{WidthAdvice} would give a max-of-$\frac{\log n}{\log \log n}$ online algorithm achieving this ratio, contradicting the previous corollary.
\end{proof}

\subsection{Candidates for an approximation algorithm with polynomial width}

While the bounds for non-uniform max-of-polynomial bipartite matching are tight, we have not provided a simple and efficient constant width algorithm that achieves an approximation ratio better than $1/2$. We present two algorithms which could have an approximation ratio better than 1/2. The first candidate is 
a simple algorithm for max-of-$k$ bipartite matching that attempts to balance the current usage  of offline vertices that are still available. The second attempt tries to de-randomize the Ranking algorithm  based on the LP approach of 
Buchbinder and Feldman.

We consider the following simple max-of-$k$ algorithm for bipartite matching, for $k \in n^{O(1)}$. The idea works for max-of-$k$ for any number $k$, but the interesting question is what approximation ratio can we achieve when $k \in n^{O(1)}$. Let $M_1, \ldots, M_k$ be the matchings of the algorithm. Define the load of an offline vertex as the number of matchings in which it has been used. The algorithm does the following: when processing an online vertex, try to balance out the loads of the available offline vertices as much as possible. The intuition behind this is that we want the offline vertex with minimum load to have as high a load as possible, so if the adversary chooses to never show this vertex again, the number of matchings that will not match the vertex is as low as possible. The way of balancing loads can be made more precise or it could be left as an arbitrary choice for the algorithm. Nonetheless, this algorithm is fairly efficient and it would be interesting to see whether an algorithm that maintains polynomially many candidate matchings constructed this way can achieve an approximation ratio greater than $1/2$.

We have also considered Algorithm \ref{BipMLP} 
following the   
LP  approach of Buchbinder and Feldman.  
Suppose that the online vertices arrive in order $\{v_1, \ldots, v_n\}$. Let $N(v)$ be the set of neighbours of vertex $v$. For a matching $M$ and an online vertex $v$, let $N(M,v)$ be the set of offline vertices that are neighbours of $v$ and have not been used in $M$. Also, let $I(M) = \max\{i \in \{1, \ldots, n\} | v_i \mbox{ is matched in }M\}$, ie $I(M)$ is the maximum index of an online vertex matched in $M$. A permutation of the offline vertices is \emph{consistent} with $M$ if $M$ is the matching that results from performing $I(M)$ steps of Ranking given this permutation. In Algorithm~\ref{BipMLP}, we use an LP-based approach at each step to obtain a polynomial width algorithm for bipartite matching. We will consider distributions of matchings, and say $(p,M) \in D$ whenever the probability of $M$ under $D$ is $p$. For all $M \in D_{i-1}$ and $u \in N(M,v_i)$ let $P_i(u,M)$ be the probability that Ranking, when run with a random permutation consistent with $M$, will choose $(v_i,u)$ in the $i$th step. Let $S_{i-1}$ be the set of matchings $M$ in $D_{i-1}$ such that $|N(M,v_i)|>0$. The variables of the LP for the $i$th step are $x(u,M)$ for $u \in N(M,v_i)$ (when $(v_i,u)$ cannot be added to $M$, it doesn't make sense to have this variable). The intended meaning of $x(u,M)$ is that it is the probability in $D_i$ of matching $v_i$ to $u$, given $M$. Letting the probability of $M$ in $D_{i-1}$ be $Pr_{D_{i-1}}[M]$, then the probability of $M \cup \{(v_i, u)\}$ in $D_i$ will be $Pr_{D_{i-1}}[M] x(u,M)$.

\begin{algorithm}\caption{Bipartite Matching LP}\label{BipMLP}
\begin{algorithmic}[1]
        \State Let $D_0 = \{(1,\emptyset)\}$.
        \For{$i=1$ to $n$}
        \State Obtain an extreme point solution of the following set of inequalities:
        \begin{eqnarray}
                \mathbb{E}_{D_{i-1}}[x(u,M)] & \leq &  \mathbb{E}_{D_{i-1}}[P_i(u,M)] \mbox{ }\forall u\in N(v_i) \label{bmlp1}\\
                \sum_{u\in N(M,v_i)} x(u,M) & = &1 \mbox{ } \forall M \in S_{i-1} \label{bmlp2}\\
                x(u,M) & \geq & 0  \mbox{ } \forall M\in S_{i-1} \mbox{ }\forall u\in N(M,v_i) \label{bmlp3}
        \end{eqnarray}
        \State Set $D_i = \{ (Pr_{D_{i-1}}[M]x(u,M), M\cup \{(v_i,u)\}) : M\in S_{i-1} \} \cup \{(Pr_{D_{i-1}}[M], M) : M\in D_{i-1} \setminus S_{i-1}\}$.
        \State Remove from $D_i$ the matchings with zero probability.
        \EndFor\\
        \Return the largest matching from $D_n$.
\end{algorithmic}
\end{algorithm}

We can talk about applying Ranking given distribution $D_{i-1}$ as follows: this will yield a distribution of matchings. It is built by first taking $M \in D_{i-1}$ with the corresponding probability. Then, a random permutation consistent with $M$ is chosen, and Ranking is applied to the rest of the online vertices using this permutation. The intuition of the algorithm is that Inequality~\ref{bmlp1} ensures that, in $D_i$, the probability of using each offline vertex in the $i$th step will not exceed the probability of using it when applying Ranking to the distribution $D_{i-1}$. Equalities \ref{bmlp2} and \ref{bmlp3} ensure that we will split $Pr_{D_{i-1}}[M]$ into the $x(u,M)$'s so that $D_i$ is a probability distribution. The set of inequalities is feasible because we can set $x(u,M) = P_i(u,M)$ for all $M \in S_{i-1}$ and $u \in N(M,v_i)$. The number of inequalities in the LP, without counting inequalities~\ref{bmlp3} that state variables are non-negative, is at most $|D_{i-1}| + |N(v)|$. Thus, an extreme point solution for the $i$th LP will have at most $|D_{i-1}| + |N(v)|$ nonzero variables, so $|D_n| \leq 1 + \sum_i |N(v_i)|$: this is a $(|E|+1)$-width algorithm, where $E$ is the set of edges of the graph.

It should be noted that this algorithm is not efficient. Note that a matching defines a partial ordering on the set of offline vertices in the following way: for every matched online vertex $v$, the offline match $u$ is greater than all the other neighbours $u'$ of $v$ that were available when $u$ is matched. The number of permutations consistent with a matching is equal to the number of linear extensions of this poset: a linear ordering on the set of offline vertices can be viewed as a permutation of this set and vice-versa. Given a poset, it is easy to construct a graph and a matching that corresponds to that poset. Also, $P_i(u,M)$ is the number of linear extensions of $M \cup \{(v_i,u)\}$ divided by the number of linear extensions of $M$. So calculating $P_i(u,M')$ for every $u$ matched in $M$ ($M'$ is the subset of $M$ constructed before matching $u$) will give the number of linear extensions of the poset associated with $M$. Therefore, the $P_i(u,M)$'s calculation is $\#P$-hard (since calculating the number of linear extensions of a poset is \cite{Brightwell}).

We have not been able to prove a good guarantee for this algorithm. One reason for this is that the analysis for Ranking usually is inductive but on the offline order (according to the permutation), not the online order, so it is unclear if the inductive invariants used to analyse Ranking will remain true in this LP-based approach. In addition, $P_i(u,M)$ does not seem to be amenable to analysis. Finally, experimental results show that the expected size of matchings in $D_n$ may be worse than the expected performance of Ranking on a graph, so trying to somehow argue that the distribution obtained from the LP algorithm outperforms Ranking will not work. For instance, on the hard instance for Random (see Figure~\ref{RandHard} and its explanation), when $n=10$, the expected ratio for $D_n$ (of size of matching over maximum matching size, in this case 10) is 0.7058 while the expected ratio for Ranking on this same graph is 0.7090. On the complete upper triangular matrix (the hard instance for Ranking in \cite{Karp1990a}), when $n=6$, the expected ratio for $D_n$ is 0.6709 and the one for Ranking is 0.6761. However it should be noted that the results are fairly close and, for this family of examples, it looks like the asymptotic approximation ratio will be at least $1-1/e$.

We will first consider the bipartite matching problem in the ROM.  We will then show that the use of randomization n the algorithm or the randomization of the inout stream cannot be 
replaced by a judiciious but deterministic ordering of the online vertices nor can it be replaced by a constant (or even $O(\frac{\log n}{\log \log n})$ width of parallel online algorithms. We will then relate the width inapproximation result to previously known results for the advice and streaming models.

\subsection{Priority Inapproximation Bound}

We now turn to study unweighted bipartite matching in the priority model. 
All of our results are for adaptive priority. While there are two related models where priority bipartite matching may be studied: one-sided (where there is one data item per vertex in the online side, and the offline side is known in advance) and two-sided (where there is one data item per vertex in the graph), we shall restrict attention to the one sided model.

\subsubsection{The inapproximation for deterministic priority algorithms}

The following theorem shows that deterministic priority algorithms cannot achieve a non-trivial asymptotic approximation ratio in the one-sided model. Thus, being able to choose the order in which to process the vertices is not sufficient to overcome the lack of randomness.

\begin{theorem}\label{Prio1/2}
There does not exist a deterministic priority algorithm that achieves an asymptotic approximation ratio greater than $1/2$ for online one-sided bipartite matching, even if the algorithm knows the size of the graph.
\end{theorem}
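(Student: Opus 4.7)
The plan is to prove this theorem by an adaptive adversary argument against an arbitrary deterministic priority algorithm $\mathbb{A}$. Following the adaptive priority framework described earlier in the paper, the adversary maintains a family $\mathcal{F}$ of candidate instances consistent with the items it has revealed so far and can shrink $\mathcal{F}$ between rounds. The aim is to produce an instance on $n + n$ vertices that has a perfect matching of size $n$, but on which $\mathbb{A}$'s output matching has size at most $n/2 + o(n)$.

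The instance is built in two conceptual phases. In the first (commitment) phase, the adversary shows online vertices with large neighborhood (essentially equal to $B$) until $\mathbb{A}$ has matched roughly $n/2$ of them to distinct offline vertices $b_{k_1}, \ldots, b_{k_{n/2}}$. In the second (trap) phase, the adversary shows online vertices whose only neighbor is some already-matched $b_{k_i}$; these cannot be matched by $\mathbb{A}$. A perfect matching exists in the final committed instance by routing the commitment vertices to $B \setminus \{b_{k_1}, \ldots, b_{k_{n/2}}\}$ and the trap vertices to $\{b_{k_1}, \ldots, b_{k_{n/2}}\}$, giving OPT $= n$ while $\mathbb{A}$ achieves only the $n/2$ commitment matches.

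Defending against priority is the delicate part, since $\mathbb{A}$ will naturally prioritize the low-degree trap items before any $b_{k_i}$ has been committed. To counter this, I will draw online names from a large universe (polynomial in $n$) and initialize $\mathcal{F}$ to contain every consistent assignment of names to the commitment/trap roles together with every consistent injection from the trap set into $B$. When $\mathbb{A}$'s top-priority item is a trap item $(u, \{b_k\})$, the adversary either declares $u$ to be a commitment vertex (which removes the item from every surviving instance, since $u$'s true neighborhood is then $B$, not $\{b_k\}$) or removes $u$ entirely from the instance by shrinking $\mathcal{F}$ to exclude it; both operations leave $\mathcal{F}$ non-empty as long as the namespace is not exhausted. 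Once the commitment phase is complete, the adversary commits the trap injection to target exactly $\{b_{k_1}, \ldots, b_{k_{n/2}}\}$, and presents the trap items in the remaining rounds.

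The main obstacle I foresee is the combinatorial bookkeeping: I must verify that $\mathcal{F}$ never becomes empty throughout the process, and in particular that the number of trap items the algorithm can force the adversary to reveal early (before the commitment phase finishes) is only $o(n)$, which is what is needed to keep the overall ratio at $1/2 + o(1)$. This boils down to a counting argument pitting the polynomial size of the namespace against the number of ``forced commitments'' $\mathbb{A}$ can extract from its $n$ rounds of priority orderings; I expect that by taking the namespace large enough, each such forced reveal can be shown to cost $\mathbb{A}$ enough priority positions that only $o(n)$ of them fit in one run, completing the argument.
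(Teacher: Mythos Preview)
Your two-phase construction has a genuine gap that the large-namespace idea cannot close. A priority algorithm may place \emph{every} degree-$1$ data item (over all online names and all singleton neighbourhoods) before \emph{every} full-neighbourhood item in its very first ordering $\pi$. For the adversary to present a commitment item in round~$1$, the instance must contain no item of smaller $\pi$-value, hence no degree-$1$ item at all; since the adversary may only shrink the potential set $S$ going forward, the trap phase is then impossible. If instead $S$ retains any degree-$1$ item, that item is the one presented, and the algorithm matches it correctly to its unique neighbour. Enlarging the name universe changes nothing: the single ordering already ranks all $N\cdot n$ degree-$1$ items ahead of all commitment items, and ``declaring $u$ to be a commitment vertex'' or ``removing $u$'' merely advances the algorithm to the next $(u',\{b'\})$ at no cost to the algorithm. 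There is no counting tension to exploit---the algorithm spends nothing per forced reveal, so you cannot bound the number of early trap reveals by $o(n)$; the algorithm simply processes all trap vertices first and then matches the commitment vertices to $B\setminus\{\text{trap targets}\}$, obtaining a perfect matching.

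The paper's construction avoids this obstruction by giving \emph{every} online vertex the same degree $(n+1)/2$, so degree-based priority is useless. The adversary maintains disjoint $M,U,R\subseteq OFF$ with the invariant $|M|=|U|$, and restricts the potential data items to those whose neighbourhood $N$ satisfies $M\subseteq N\subseteq OFF\setminus(U\cup R)$. Whenever the algorithm matches the current vertex to some $m$, the adversary puts $m$ into $M$ and one other neighbour into $U$ (never to appear again); on a rejection, one neighbour goes to $R$. After $(n-1)/2$ rounds one has $|U|+|R|=(n-1)/2$, the remaining items all have the forced neighbourhood $OFF\setminus(U\cup R)$, and the algorithm can match at most $(n+1)/2$ vertices total, while a perfect matching exists by routing the first $(n-1)/2$ online vertices into $U\cup R$. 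The essential idea you are missing is to hide the trap inside uniform-degree neighbourhoods that evolve with the algorithm's choices, rather than in structurally distinguishable low-degree items that any priority ordering can isolate.
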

\begin{proof}
We describe a game between an algorithm and an adversary that, for any odd integer $n \geq 3$, yields an inapproximation of $\frac{(n+1)/2}{n}$, which can be made arbitrarily close to $1/2$ by making $n$ sufficiently large. The bipartite graph will have two sides, each with $n$ vertices. Let $OFF$ be the set of offline vertices and let $ON$ be the set of online vertices. We restrict our attention to graphs in which all vertices in $ON$ have degree $(n+1)/2$. The idea is that the algorithm does not know, a priori, anything about the degrees of vertices in $OFF$, so we can adjust the neighbours of vertices in $OFF$ to ensure that the algorithm makes mistakes at each step.

The adversary will keep track of $M$, $U$, $R$, which are pairwise disjoint subsets of $OFF$ (all initially empty). $M$ will be the set of vertices in $OFF$ matched by the algorithm, $U$ will be the set of vertices that the algorithm cannot possibly match because of the matches it has already done, and $R$ will be the set of vertices that the algorithm cannot match because of the rejections it has made. Note that the algorithm won't be able to match vertices in both $U$ and $R$. The only difference between the sets is the reason for this ``unmatchability''. As the game between the algorithm and adversary progresses, the adversary announces some information about the graph, so that the set of possible instances may be further restricted.

The adversary will ensure that $|M| = |U|$ whenever the algorithm has to provide an ordering of data items. The set $P$ of potential data items is defined as the set of vertices of degree $(n+1)/2$ where the set of neighbours $N$ satisfies $M \subseteq N$ and $N \subseteq OFF \setminus (U \cup R)$. In other words, $N$ contains $M$ and is disjoint from $U$ and $R$. Initially $P$ consists of all data items of vertices of degree $(n+1)/2$ with neighbours in $OFF$, and $P$ shrinks every time $M$, $U$, and $R$ are updated. While $|U|+|R| < (n-1)/2$, the algorithm receives the data item of the vertex $v$ from $P$ that comes first in some ordering $\pi$ of data items. Note that the number of neighbours in $OFF \setminus (U \cup R \cup M)$ is at least 2 because $|M| <(n-1)/2 =(n+1)/2-1$. There are two options: the algorithm matches $v$ to a neighbour (in $OFF \setminus (U \cup R \cup M)$, because vertices in $M$ have already been matched by the algorithm), or it rejects $v$. We now show how we maintain our invariant ($|M| = |U|$) in either case.

If the algorithm matches $v$ to some vertex $m \in OFF \setminus (U \cup R \cup M)$, the adversary updates $M$ by adding $m$. This means that all the vertices in $ON$ that have not been processed yet will be neighbours of $m$. It picks another neighbour $u$ of $v$ from the set $OFF \setminus (U \cup R \cup M)$ and updates $U$ by adding $u$. This implies that none of the vertices in $ON$ that have not been processed yet will be neighbours of $u$, so it is impossible for the algorithm to match vertex $u$. If the algorithm rejects $v$, the adversary updates $R$ by adding a neighbour $r$ of $v$ from $OFF \setminus (U \cup R \cup M)$. Thus, none of the vertices in $ON$ that have not been processed yet are neighbours of $r$ and $r$ will remain unmatched. Note that, in either case, the condition $|M| = |U|$ is still maintained.

Each time a vertex in $ON$ is examined, exactly one of $|U|$ and $|R|$ is increased by 1. Thus, after $(n-1)/2$ vertices in $ON$ have been examined, $|U|+|R|=(n-1)/2$. At this point, $P$ consists of vertices whose set of neighbours is $OFF \setminus (U \cup R)$. This is necessary to guarantee that the potential data items correspond to vertices of degree  $(n+1)/2$. The adversary no longer shrinks $P$ while the remaining $(n+1)/2$ vertices are examined. The matching obtained by the algorithm is at most $n - (|U|+|R|) = (n+1)/2$ because it does not match any vertices in $U \cup R$.

However, there exists a perfect matching. We construct it by looking back at the game between the algorithm and adversary. We match the first $(n-1)/2$ vertices processed by the algorithm to vertices in $U \cup R$. For each step in which the algorithm rejected a vertex $v$, there was some vertex $r$ added to $R$: we match $v$ to $r$. For each step in which the algorithm matched a vertex $v$, it was matched to a vertex $m$ and there was a $u$ that was added to $U$. We match $v$ to $u$. Thus, after $(n-1)/2$ vertices, we have matched all of $U$ and $R$. We are left with vertices in $OFF \setminus (U \cup R)$, which can be matched in any way to the $(n+1)/2$ remaining vertices in $ON$ because that is precisely their set of neighbours.

For an example, take $n=7$. Then the degrees have to be $4$ and the adversary constructs $M$, $U$, and $R$ when the first 3 online vertices are being examined. An example where the algorithm matches the first online vertex, rejects the second, and matches the third is shown in Figure~\ref{7Bip-1}. The online vertices examined after an offline vertex is added to $M$ will be neighbours of this vertex. The online vertices examined after an offline vertex is added to $U \cup R$ will not be neighbours of this vertex. A perfect matching for this example is shown in Figure~\ref{7Bip-2}.

\begin{figure}[!ht]
  \centering
    \includegraphics[width=0.5\textwidth]{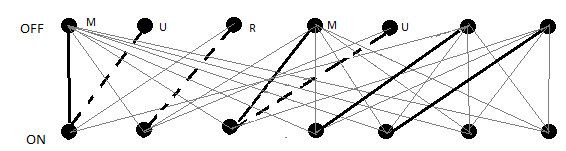}
    \caption{Offline vertices are labelled whenever they belong to $M$, $U$, or $R$. Dark lines correspond to matches made by the algorithm. Dotted lines correspond to edges such that, when the online vertex was processed, the adversary added the offline vertex to $U \cup R$. Light lines are all other edges. The algorithm obtains a matching of size 4.}
    \label{7Bip-1}
\end{figure}
\begin{figure}[!ht]
  \centering
    \includegraphics[width=0.5\textwidth]{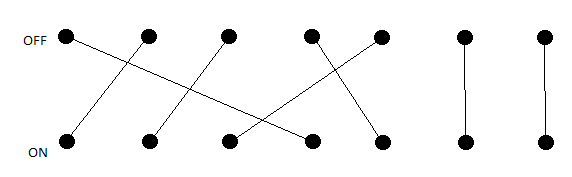}
    \caption{Same instance as before, now showing only the edges corresponding to a perfect matching obtained by the adversary.}
    \label{7Bip-2}
\end{figure}
\end{proof}


\subsubsection{Randomized priority algorithms}
\label{subsec:randomized-priority}
For randomized priority algorithms, we thus far have only  a much weaker $\frac{53}{54}$ inapproximation result in Theorem ~\ref{thm:randomized-priority-inapprox}. Algorithm Ranking can obviously be implemented in this model, and this already an approximation ratio of $1-1/e$, which is optimal for online randomized algorithms. This leaves open 
the question of whether there is a randomized priority algorithm beating the optimal randomized online
Ranking algorithm. 

The following argument emphasizes the difficulty in proving 
inapproximation results for randomized priority algorithms. Using the Yao minimax lemma, it is sufficient to define a distribution on inputs such that every 
deterministic algorithm will suffer an inapproximation (in expectation) for
that distribution.   
Consider the graph in 
Figure~\ref{PrioOneRand} and consider the following reasoning. Given that 
all online nodes have degree 2, it would 
seem that if $v$ is the first node being seen (in the random ordering of the 
online vertices), then it will fail to choose $u$ as its match with
probability $\frac{1}{2}$ and hence the probability that the algorithm makes
a mistake for the first online vertex it sees is 
$\frac{1}{6} > \frac{1}{9}$.
This would yield the weak but improved  bound of 
$\frac{17}{18}$. This reasoning is correct if the
algorithm can only make decisions for an online node based on the degree of
the node (and any previously seen information). For the first online node, 
there is no previous information, so that a  {\it degree-based} algorithm 
would indeed achieve the improved $\frac{17}{18}$ approximation. 

%

However, there is a way to order the vertices so that the probability of making a mistake for the first online vertex is less than $\frac{1}{6}$. Let a data item be described by $l_1 : (l_2, l_3)$, meaning that this corresponds to the data item of the online vertex with label $l_1$, having the offline vertices labelled with $l_2$ and $l_3$ as neighbours. Let the priority ordering given by the algorithm begin as follows: $v_1:(u_1,u_2), v_1:(u_1,u_3), v_2:(u_1,u_2), v_2:(u_1,u_3), v_3:(u_1,u_2), v_3:(u_1,u_3)$. At least one of the data items must be in the instance, regardless of how the nodes are permuted. Let us assume that upon receiving its first data item, the algorithm matches the online vertex to $u_1$.

We now analyse the probability that $v$ is matched, but not to $u$. If $v$ is not the first online vertex received, it is easy to see that the algorithm can achieve a perfect matching. When $u$ is labelled as $u_1$, then $v$ will be the first online vertex received, but the algorithm will match $v$ to $u$ and achieve a perfect match. So the only case when the algorithm can make a mistake is when $v$'s other neighbour, $u'$, is labelled as $u_1$. Since $u'$ has all online nodes as neighbours, the one labelled $v_1$ will be received first. To make a mistake, then, $u'$ has to be labelled as $u_1$ and $v$ has to be labelled as $v_1$. Thus, the probability that the algorithm makes a mistake is $\frac{1}{9}$, and the algorithm achieves an approximation ratio of $\frac{8}{9} + \frac{1}{9} \times \frac{2}{3} = \frac{26}{27}$. This algorithm shows why the above argument is incorrect: it was able to receive $v$ with probability $4/9$, and conditioned on $v$ being the first vertex, $v$ is matched to $u'$ with probability $1/4$.

\begin{figure}[!ht]
  \centering
    \includegraphics[width=0.5\textwidth]{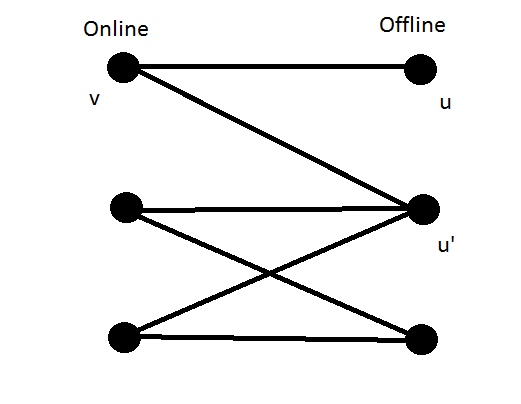}
    \caption{A graph showing the difficulty of proving priority randomized inapproximations.}
    \label{PrioOneRand}
\end{figure}   

We are able to show that randomized priority algorithms cannot achieve optimality 

\begin{theorem}
\label{thm:randomized-priority-inapprox}

No randomized priority algorithm can acheive an approximation better than 
$\frac{53}{54}$. 
\end{theorem}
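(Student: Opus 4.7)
The plan is to apply Yao's minimax principle: it suffices to construct a distribution $\mathcal{D}$ on input instances on which every deterministic priority algorithm has expected approximation ratio at most $53/54$, and this bound transfers to randomized priority algorithms.

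I would take $\mathcal{D}$ to be the uniform distribution over labelings of the $3+3$ gadget graph from Figure~\ref{PrioOneRand}, generated by applying independent uniformly random permutations to the online and offline vertex labels, yielding $6\cdot 6 = 36$ equally likely instances. Recall the structure: a distinguished online vertex $v$ is matched in the unique perfect matching to a private offline vertex $u$ of degree $1$; its other neighbor $u'$ is a common neighbor of all three online vertices; and the remaining online vertices $w_1,w_2$ share both $u'$ and a second offline vertex $u''$. Any match of $v$ to $u'$ (or refusal to match $v$) reduces the algorithm's matching to size at most $2$, incurring a loss of $1$ versus the optimum of $3$.

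The core of the argument is a first-step analysis. By the label symmetry of $\mathcal{D}$, we may assume without loss of generality that the algorithm's first requested data item has the canonical form $(v_1,\{u_1,u_2\})$. Exactly $4$ of the $36$ labelings have this request coincide with $v$'s actual data item: two online sub-permutations labeling $v$ as $v_1$, multiplied by two offline sub-permutations mapping $\{u,u'\}$ bijectively to $\{u_1,u_2\}$ (with $u''\mapsto u_3$). In each of these $4$ labelings the algorithm receives $v$'s data first and must irrevocably pick a match in $\{u_1,u_2\}$ or reject $v$. Whatever deterministic rule the algorithm uses, at least $2$ of the $4$ labelings produce a mismatch: if it matches $v$ to $u_1$, then $u'=u_1$ in $2$ of the labelings and the match is wrong; if it matches $v$ to $u_2$, then $u'=u_2$ in the other $2$; and if it rejects $v$, all $4$ are wrong. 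Thus the event ``$v$ is not matched to $u$'' has probability at least $2/36 = 1/18$, yielding expected matching size at most $3-1/18 = 53/18$ and ratio at most $53/54$.

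The main obstacle is formalizing the label-symmetry reduction: the priority function is defined on all potential data items, and one must verify that relabeling the algorithm's priority preserves its expected performance on $\mathcal{D}$, so that we may canonically assume the first successful request is $(v_1,\{u_1,u_2\})$. Once this is done, the combinatorial counting is direct, and no analysis beyond the first step is needed because the $2$ mismatch labelings from that step alone yield the required $1/18$ expected loss. The $53/54$ bound is thus essentially a first-step argument, and it leaves a substantial gap to the $26/27$ upper bound achieved by the explicit randomized priority algorithm described earlier; closing this gap likely requires accounting for mistakes made on later steps and possibly a more intricate hard distribution than a single relabeled gadget.
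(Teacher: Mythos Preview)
Your proposal is correct and follows essentially the same approach as the paper: apply Yao's principle to the uniform relabeling distribution on the three-vertex gadget of Figure~\ref{PrioOneRand}, and observe that whatever deterministic match the algorithm makes on its highest-priority data item, with probability $1/18$ that item is $v$'s and the chosen neighbor is $u'$, forcing a size-$2$ matching. Your write-up is in fact more explicit than the paper's (which compresses the argument to two sentences and contains a couple of typos), and your counting of the $4$ labelings with $2$ inevitable mistakes is exactly the $1/3\cdot 1/6$ computation the paper gives.
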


\begin{proof}
Using the names as above, suppose the algorithm first considers $v_i:(v_j,v_k)$ and without loss of generality suppose the algorithm will match $v_i$ to $v_j$. Then with probability $\frac{1}{3}$, $v$ is named $v_1$, and with probability $\frac{1}{6}$, $(u',u)$ is named $(u_j,u_k)$ where $u'$ is the other neighbor of $v$.

\end{proof}

\subsection{ROM bipartite Matching}

Ranking is an online randomized algorithm,  but as observed in
\cite{Karp1990a}, it also has a well-known interpretation as a ROM algorithm. It is equivalent to the Fixed Ranking algorithm in ROM: performing online Ranking on a graph $G=(V_1,V_2,E)$ (where $V_1$ is the online side and $V_2$ the offline side) is the same as deterministically matching to the first available neighbour in the graph $G'=(V_2,V_1,E)$, in the ROM model (now $V_2$ are the online vertices, and the ordering of $V_1$ used to decide which vertex is ``first'' corresponds to the online order in $G$). This means that deterministic ROM algorithms can achieve an approximation ratio of $1-1/e$.

It is known that algorithm Random does not get an approximation ratio better than 1/2 in the online setting \cite{Karp1990a}. Figure~\ref{RandHard} shows the hard instance used for $n=6$. The online vertices are columns, and they arrive from right to left. The offline vertices are rows, and a $0$ entry means there is no edge between the corresponding online and offline vertices, while a $1$ means that there is. This is generalized in an obvious way to an instance of $2k$ vertices. The first $k$ online vertices will have degree $k+1$. They will have $k$ common neighbours (corresponding to the first $k$ rows in the matrix), plus an additional neighbour that has degree 1 (hence, the correct choice is to match the online vertex to this additional neighbour). The last $k$ online vertices will have degree 1, each being matched to one of the first $k$ rows.

\begin{figure}[!hb]
\caption{Graph that is hard for algorithm Random.}
\label{RandHard}
$$\begin{pmatrix}
1&0&0&1&1&1\\
0&1&0&1&1&1\\
0&0&1&1&1&1\\
0&0&0&1&0&0\\
0&0&0&0&1&0\\
0&0&0&0&0&1\\
\end{pmatrix}
$$\end{figure}

Random has trouble with this instance because it does not use information from previous online vertices when it has to ``guess'' the correct match of a given online vertex. In contrast, by using a random permutation of the offline vertices, Ranking tends to be biased in favor of vertices that have occurred less in the past. Suppose that several online vertices have been matched. In the permutation used by the algorithm, the remaining offline vertices from among the first $k$ must appear after all the offline vertices that have already been matched. On the other hand, we know nothing about the relative order among the first $k$ offline vertices of the offline vertex $u$ of degree 1 that is neighbour of the next online vertex. Thus, there are more permutations consistent with the choices made so far where $u$ comes first among those available, so $u$ is more likely to be matched. In contrast, in Random, at any point in time, all available vertices are equally likely to be matched.

Ranking has been well studied in ROM. It is interesting to note, however, that this is not the case for algorithm Random. Of course, it is very plausible that Random will not beat Ranking in ROM. However, it would be nice to know if it gets an approximation ratio greater than 1/2 in this model. In fact, we do not even know whether Ranking is asymptotically better than Random. It could even be the case that Random is asymptotically better than Ranking. In the analysis of the following theorem we show that the performance of Random in ROM for the instance described above is worse than that of Ranking. 

\begin{theorem}
Algorithm Random achieves an asymptotic approximation ratio of at most $3/4$ in ROM.
\end{theorem}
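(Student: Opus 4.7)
The plan is to analyze algorithm Random on the family of instances generalizing Figure~\ref{RandHard}. Take $n=2k$ on each side, with offline vertices $c_1,\dots,c_k$ (``common'') and $u_1,\dots,u_k$ (``unique''), and online vertices $b_1,\dots,b_k$ with $N(b_j)=\{c_1,\dots,c_k,u_j\}$ together with $g_1,\dots,g_k$ where $N(g_i)=\{c_i\}$; the matching $\{(b_j,u_j)\}\cup\{(g_i,c_i)\}$ has size $2k=\mathrm{OPT}$. First observe that when $b_j$ arrives, $u_j$ is still available (it has no other neighbour), so Random matches every $b_j$. Hence the algorithm's expected matching size is $k+\mathbb{E}[S]$, where $S$ counts matched good vertices, and $g_i$ is matched if and only if no bad arriving earlier picked $c_i$.

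I would next reduce $\mathbb{E}[S]$ to a single quantity about the bads. Let $T$ be the (random) set of commons selected by bads arriving strictly before $g_1$, and set $M=|\{b_j:b_j\prec g_1\}|$ and $M'=|T|\le M$. The automorphism of the instance that swaps $g_i\leftrightarrow g_j$ together with $c_i\leftrightarrow c_j$ shows that $\Pr[g_i\text{ matched}]$ is the same for all $i$, so $\mathbb{E}[S]=k\cdot\Pr[g_1\text{ matched}]=k\cdot(1-\Pr[c_1\in T])$. For $\Pr[c_1\in T]$, the key point is that each bad's decision rule (uniform over its available commons together with its private $u_j$) does not depend on the label of $g_1$'s partner, so the bads' choices are exchangeable in $c_1,\dots,c_k$ and $(\mathbf 1[c_i\in T])_{i=1}^{k}$ is an exchangeable family. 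This yields $\Pr[c_1\in T]=\mathbb{E}[|T|]/k=\mathbb{E}[M']/k$, and hence $\mathbb{E}[S]=k-\mathbb{E}[M']$.

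The last step is to bound $\mathbb{E}[M']$ from below. By ROM symmetry, $g_1$ is equally likely to occupy any of the $k+1$ ranks among $\{g_1,b_1,\dots,b_k\}$, so $M$ is uniform on $\{0,\dots,k\}$ and $\mathbb{E}[M]=k/2$. The shortfall $M-M'$ counts bads arriving before $g_1$ that chose their private $u_j$ rather than a common; when the $t$-th such bad arrives, at most $t-1$ commons have been removed, so it picks $u_j$ with probability at most $1/(k-t+2)$, and summing over $t\le M\le k$ gives $\mathbb{E}[M-M']\le H_{k+1}=O(\log k)$, whence $\mathbb{E}[M']\ge k/2-O(\log k)$. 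Combining, the expected matching size is at most $k+(k-\mathbb{E}[M'])\le 3k/2+O(\log k)$, so the approximation ratio on this family is at most $3/4+O((\log k)/k)$, which tends to $3/4$. The step I expect to need the most care is the exchangeability claim: conditional on the arrival order, the joint law of $T$ must be verified to be invariant under permutations of $\{c_1,\dots,c_k\}$, which follows because each bad's decision depends on the commons only through the unordered set of currently available ones and cannot see which common is secretly marked as $g_1$'s partner.
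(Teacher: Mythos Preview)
Your approach---directly analyzing Random on the hard family---is different from the paper's, which instead compares Random to Ranking on these instances and invokes the known $3/4$ upper bound for Ranking from \cite{Karande2011}. A direct analysis would be a pleasant alternative, but the argument as written has a genuine gap.

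The exchangeability claim is false. You assert that, conditional on the arrival order, the law of $T$ is invariant under permutations of $\{c_1,\dots,c_k\}$ because ``each bad's decision depends on the commons only through the unordered set of currently available ones.'' The problem is that the \emph{available set itself} is not symmetric in the $c_i$'s once the arrival order is fixed: any good $g_i$ with $i\neq 1$ that arrives before $g_1$ removes precisely $c_i$ (if it is still there), whereas $c_1$ is guaranteed to remain available to every bad preceding $g_1$, since only $g_1$ or a bad can take $c_1$. This asymmetry is real already for $k=2$: a direct computation gives $\Pr[c_1\in T]=149/432$ and $\Pr[c_2\in T]=92/432$, so $\Pr[c_1\in T]\neq \mathbb{E}[M']/k$.

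As it happens, the asymmetry points in your favour: $c_1$ is \emph{more} exposed to bads than a typical $c_i$, so one expects $\Pr[c_1\in T]\ge \mathbb{E}[M']/k$, and with that inequality your bound $\mathbb{E}[M']\ge k/2-O(\log k)$ would finish the proof. But this inequality needs its own coupling or monotonicity argument, which you have not supplied, and it is not immediate. The paper sidesteps all of this: it fixes the arrival order and the past matches, observes that in the conditioned Ranking permutation every still-available common is forced to come after all commons already used by bads while the private $u_j$ is unconstrained, and concludes that Ranking is at least as likely as Random to make the correct match at each degree-$(k+1)$ step; the $3/4$ bound then follows from the Ranking analysis of \cite{Karande2011}.
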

\begin{proof}
Consider the hard graphs of Figure~\ref{RandHard}. Fix a permutation of online vertices. Ranking will randomly choose a permutation of the offline vertices $\sigma$. Notice that Ranking and Random behave the same on online vertices of degree 1: they will match to its neighbour if it is still available. In particular, the behaviour of Ranking on degree 1 vertices reveals nothing about $\sigma$. We now see what happens to the vertices of degree $k+1$. For the first of these, both algorithms behave the same: the probability of picking any at random is equal to the probability of any being first (among the options) in the ranking.

Now, consider any online vertex $u$ of degree $k+1$ and fix the choices of matches of previous vertices. We will analyse the behaviour of Ranking and Random under the assumption that these matches were made previously by both algorithms. Let $A$ be the set of offline vertices of degree $k+1$ that were previously matched to online vertices of degree $k+1$: the mistakes from the past. Let $B$ be the set of offline vertices of degree $k+1$ that are still available currently, when choosing the match of $u$. Notice that all vertices in $B$ must have also been available before. Because Ranking chose matches with vertices in $A$, then all vertices in $A$ go before all vertices in $B$ in $\sigma$. However, consider the unique offline vertex $v$ of degree 1 that is neighbour of $u$: the correct match of $u$. This vertex could appear anywhere in $\sigma$, since it has never been considered before. This means that the probability of this vertex being first in $\sigma$ among the available vertices is greater than that of any vertex in $B$. So the probability that Ranking matches $u$ to $v$ is higher than the probability that it matches $u$ to any vertex in $B$.

Thus, for any permutation of online vertices and any fixed choices of matching for previous vertices, the probability that Ranking matches the current online vertex of degree $k+1$ to its neighbour of degree 1 (its correct match) is greater than or equal to the probability that Random does so. The size of the matching is equal to $k$ plus the number of correct matchings: any incorrect match means there will be an online degree 1 vertex that will be unmatched. It is proved in \cite{Karande2011} that the asymptotic approximation ratio of Ranking on these graphs is at most 3/4. Thus, this bound also holds for Random.
\end{proof}

We now turn our attention to deterministic ROM algorithms. The following shows that in ROM an inapproximation result using a specific small graph will not yield the same inapproximation for arbitrarily large graphs just by taking the union of disjoint copies of the small graph. Consider the bipartite graph with 2 vertices on each side, where one online vertex has degree 2 and the other has degree 1. By carefully choosing the neighbour of the degree 1 vertex, we get that no deterministic algorithm can get an approximation ratio better than 0.75. Now consider a graph with 4 vertices, with 2 components, each with the form of the graph with 2 online vertices (see Figure~\ref{DisjointCopyROM}).

\begin{figure}[!htbp]
  \centering
    \includegraphics[width=0.4\textwidth]{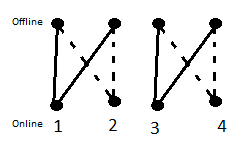}
    \caption{Vertices 2 and 4 have degree one, and the dotted lines show the possible neighbours.}
    \label{DisjointCopyROM}
\end{figure}

Consider the following algorithm: match greedily at each step (so that if vertex 2 comes before 1 and 4 before 3, a perfect matching is obtained). Now, if the first vertex received (among 1,2,3,4) is vertex 1 or 3, match it to the leftmost neighbour. Otherwise, whenever there is a choice for vertex 1 because it occurred before vertex 2, choose the rightmost neighbour (analogously for vertex 3). We show that this algorithm achieves an approximation ratio of 0.875 on these graphs. If the graph consists of vertex 2 having the leftmost offline vertex as neighbour, the algorithm will get both vertices matched whenever vertex 1 occurs first in the permutation, and this happens with probability 1/4. The probability that vertex 2 comes before 1 is 1/2. Therefore the algorithm matches $1/2*2+1/4*2+1/4*1 = 1.75$ edges in expectation. If the graph consists of vertex 2 being matched to the option on the right, the algorithm will match both when 1 does not occur first, which happens with probability 3/4. Thus in this case the algorithm matches $3/4*2+1/4*1 = 1.75$ edges in expectation. The analysis for the other component can be done in the same way. That is, the performance of the algorithm on any such graph is $2*1.75 = 3.5$. Thus, the algorithm achieves an approximation ratio of $3.5/4 = 0.875$.

To the best of our knowledge, it is not known whether there is any deterministic algorithm that achieves an asymptotic approximation ratio better than $1-1/e$. While we cannot answer this question, we do provide some experimental results suggesting that there may be some better algorithms. As in the online world, it is easy to see that we can assume without loss of generality that the algorithm is greedy: it matches whenever possible. An arbitrary algorithm $ALG$ can be simulated by a greedy algorithm $ALG'$ as follows: if $ALG$ matches online vertex $u$ to $v$ and $v$ is not yet matched in $ALG'$, do that same matching; else match $u$ to any unmatched neighbour in $ALG'$, if there is any. The set of offline vertices matched by $ALG'$ will always contain the set of offline vertices matched by $ALG$, so the approximation ratio of the former is at least as good as that of the latter.

Thus the question is, given a vertex, what offline neighbour is chosen as the match, among those available. We define the ranking of an online vertex $v$ as an ordering of offline vertices which determines which one is matched to $v$, among the ones that are available. For instance, in Fixed Ranking, the ranking used is the same for every online vertex. We consider some algorithms which differ from this greedy algorithm with a fixed ranking. In the following, we use the convention that online vertices are labelled $0, 1, \ldots, n-1$ in order of arrival, and offline vertices are somehow labelled also from $0$ to $n-1$. We consider the following algorithms:

\begin{algorithm}[!t] \caption{Cyclic Ranking}
\begin{algorithmic}[1]
	\For{$i=0$ to $n-1$}
	\State Let $v$ be the $i$th online vertex and let $d = \mbox{deg}(v)$ (counting matched offline vertices).
	\State Let $f = (i+d) \mod (n-1)$. Let $\sigma = (f, (f+1) \mod n, \ldots, (f+n-1) \mod n)$.
	\State The algorithm matches $v$ using the ranking given by $\sigma$ (first try $f$, if not try $(f+1) \mod n$, etc.).
	\EndFor
\end{algorithmic}
\end{algorithm}

\begin{algorithm}[!t] \caption{Left-right Ranking}
\begin{algorithmic}[1]
	\For{$i=0$ to $n-1$}
	\State Let $v$ be the $i$th online vertex and let $d = \mbox{deg}(v)$ (counting matched offline vertices).
	\State Let $\sigma =
	\left\{
	\begin{array}{ll}
		(0,1,\ldots,n-1)  & \mbox{if } i+d \mod 2 = 0 \\
		(n-1,n-2,\ldots,0) & \mbox{otherwise}
	\end{array}	
	\right.$.
	\State The algorithm matches $v$ using the ranking given by $\sigma$.
	\EndFor
\end{algorithmic}
\end{algorithm}

\begin{algorithm}[!t] \caption{Least-seen}
\begin{algorithmic}[1]
	\For{$i=0$ to $n-1$}
	\State Let $v$ be the $i$th online vertex.
	\State The algorithm matches $v$ to the neighbour that has occurred the least so far (breaking ties arbitrarily).
	\EndFor
\end{algorithmic}
\end{algorithm}

For every algorithm $\mathbb{A}$ and every $n$, we consider every possible bipartite graph with $n$ online vertices and $n$ offline vertices. For each graph, we consider all permutations of the online vertices. This yields a ratio $r(\mathbb{A}, n)$: the minimum over all the graphs of the average over all permutations of the performance of the algorithm on the permuted graph, divided by the optimal matching size. Recall that for $n=2$, $0.75$ is optimal for deterministic algorithms. Because the number of bipartite graphs grows exponentially and for each graph all permutations are considered, we limit ourselves to $n \leq 6$ due to the computational resources needed. Table \ref{detROM} contains the results.

\begin{table}[h!]
\centering
\caption{Experimental results for deterministic ROM algorithms}
\label{detROM}
\begin{tabular}{l|llll}
\textbf{n} & \textbf{Fixed Ranking} & \textbf{Cyclic Ranking} & \textbf{Left-right Ranking} & \textbf{Least-seen} \\ \hline
\textbf{3} & 0.7222                 & 0.7222                  & 0.7778                      & 0.7222              \\
\textbf{4} & 0.6979                 & 0.7292                  & 0.7292                      & 0.6875              \\
\textbf{5} & 0.6850                 & 0.7100                  & 0.7267                      & 0.6817              \\
\textbf{6} & 0.6762                 & 0.7023                  & 0.7069                      & 0.6722             
\end{tabular}
\end{table}

The table shows how Cyclic Ranking and Left-right Ranking seem to be beating Fixed Ranking for all $n$ (except possibly tying for $n=3$). Despite having lower performance than Fixed Ranking, we decided to include Least-seen because it seemed like a natural candidate, giving priority to offline vertices that have not been seen much, because most likely they have lower degree and thus less chance of being matched and because this somewhat corresponds to our intuition about what gave Ranking an advantage over Random in the online world. However the results seem to indicate it does not do very well against worst-case input. The most interesting question is whether there is an algorithm with asymptotic approximation ratio greater than $1-1/e$, but the fact that Fixed Ranking is not optimal for every $n$ gives hopes for a positive answer.
For unweighted matching, 
without loss of generality we can restrict attention to greedy algorithms, that is those that will always make a match whenever it encounters a data item where the vertex has available neighbours. 
  

\section{Conclusion and Open Problems}
\label{sec:conclusions}

We considered bipartite matching in the width, priority and ROM  models. By observing that non-uniform max-of-$k$ and advice results are equivalent, it follows that polynomially many online matchings are enough to achieve an approximation ratio arbitrarily close to $1-1/e$ while exponentially many online matchings are required to  achieve an approximation ratio greater than $1-1/e$. We turned to the question as to how large $k$ needs to be so that max-of-$k$ or width-$k$ algorithms can obtain an approximation ratio greater than $1/2$. We showed that if $k$ is constant then width-$k$ algorithms cannot achieve this. By analysing the proof more closely, we noticed that in fact max-of-$\frac{\log n}{\log \log n}$ is not enough. This gave new lower bounds for bipartite matching with advice. However, there is still a gap between our inapproximation bound and the non-uniform max-of-$n^{O(1)}$ algorithm that de-randomizes Ranking (see Theorem~\ref{WidthUppLow}). This algorithm is unsatisfactory because it requires exponential time pre-calculation. 
The main open problem then is whether or not a uniform and efficient polynomial width online algorithm is possible for online bipartite matching. We offered some possibilitiees in this regard. The study of width bipartite matching in the PBT model \cite{Alekhnovich2011a} is left as future work.

With regard to the ROM and priority models, we showed that deterministic priority algorithms cannot improve upon the $\frac{1}{2}$ approximation achieved by any greedy maximal matching algorithm for matching. The situation for randomized priority algorithms remains quite unclear as to whether or not the KVV randomized online Ranking algorithm $1-\frac{1}{e}$ approximation can be improved upon. We give an example showing why the analysis of such algorithms is subtle. It is also the case that biparttite online matching is not completely  understood in the ROM model. In particular, what is the precise approximation for the randomized KVV algorithm in the ROM model and is it an optimal algorithm in this regard? And what is the approximation ratio for the Random algorithm in the ROM model?

Another problem that has been studied extensively and has important theoretical and practical interest is max-sat. We considered this problem in the width models. The problem itself has multiple possible input models, and it has generalizations such as submodular max-sat. In input model 1, where the algorithm knows the weights and lengths of clauses in which the variable being processed appears (both positively and negatively), we gave an efficient linear width algorithm achieving a $3/4$ approximation ratio by using very similar techniques to those in the deterministic algorithm with approximation ratio $1/2$ for unconstrained submodular maximization \cite{Buchbinder2016}. We showed that, to achieve an approximation ratio greater than $3/4$, exponential width (in the width-cut model) is required, even with input model 2. In input model 0, where the algorithm only knows the weights of clauses in which the variable being processed appears, we showed that deterministic constant width cannot achieve an approximation ratio of $2/3$, so having either the length of clauses or randomization is essential; Johnson's deterministic algorithm and the randomized algorithm that assigns a variable with probabilities proportional to weights both achieve an approximation ratio of 2/3 \cite{Johnson} \cite{Azar2011a}. In the strongest input model, where the algorithm receives a complete description of the clauses in which the variable appears, we showed an exponential lower bound on the width-cut necessary to get an approximation ratio greater than $5/6$. This should be contrasted with the exponential lower bound on the width needed by a pBT algorithm to get an approximation ratio better than $21/22$ \cite{Alekhnovich2011a}. For input model 3, there is clearly a large gap between the online polynomial width $\frac{3}{4}$ approximation and the proven inapproximation for the pBT width model. The question of limitations of randomized priority algorithms has not been studied to the best 
of our knowledge.

Two (and multiple) pass algorithms provide another interesting extension of online (or priority) algorithms. The determinstic two-pass 
online algorithms for 
max-sat (\cite{Polo2pass}) and biparitite matching (\cite{Durr2016a}) provably improves upon the known bounds for deterministic one pass  algorithms against 
an adevrsarial input.  
 
The ultimate goal is to see to what extent ``simple combinatorial algorithms'' can achieve approximation ratios close to known hardness results and in doing so can come close to or even improve upon the best known offline algorithms.
Part of this agenda is to better understand when and how one can de-randomize 
online algorithms.

\bibliographystyle{siam}
\bibliography{thesis.bib}{}

\end{document}